\newtheorem{theorem}{Theorem}[section]
\newtheorem{proposition}[theorem]{Proposition}
\newtheorem{lemma}[theorem]{Lemma}
\newtheorem{corollary}[theorem]{Corollary}
\theoremstyle{definition}
\theoremstyle{remark}
\newtheorem{remark}[theorem]{Remark}
\numberwithin{equation}{section}
\newcommand{\be}{\begin{equation}}
\newcommand{\ee}{\end{equation}}
\newcommand{\bbC}{{\mathbb C}}
\newcommand{\bbZ}{{\mathbb Z}}
\newcommand{\bbR}{{\mathbb R}}
\newcommand{\bbP}{{\mathbb P}}
\newcommand{\calG}{{\mathcal G}}
\newcommand{\calK}{{\mathcal K}}
\newcommand{\calL}{{\mathcal L}}
\newcommand{\calH}{{\mathcal H}}
\newcommand{\calF}{{\mathcal F}}
\newcommand{\calO}{{\mathcal O}}
\newcommand{\frakh}{{\mathfrak h}}
\newcommand{\frakg}{{\mathfrak g}}
\newcommand{\frakn}{{\mathfrak n}}
\newcommand{\frakt}{{\mathfrak t}}
\newcommand{\frakG}{{\mathfrak G}}
\DeclareMathOperator{\tr}{Tr}
\newcommand{\inner}[2]{\langle#1,#2\rangle}
\newcommand{\norm}[1]{\lVert#1\rVert}
\newcommand{\isom}{\cong}
\newcommand{\bbJ}{{\mathbb J}}
\begin{document}

\title{The exponential map of the complexification of {\em Ham} in the real-analytic case.}
\author{D. Burns}
\address{Mathematics Department\\
University of Michigan\\Ann Arbor, Michigan 48109}
\email{dburns@umich.edu}
\thanks{D.B. supported in part by NSF grant DMS-0805877}
\author{E. Lupercio}
\address{CINVESTAV\\Departamento de Matem\'aticas\\Apartado Postal 14-740 07000\\
M\'exico, D.F., M\'exico}
\email{lupercio@math.cinvestav.mx}
\thanks{E.L. was supported in part by Fundaci\'{o}n Marcos Moshinsky, Instituto de Matem\'{a}ticas UNAM , Conacyt and by Laboratory of Mirror Symmetry NRU HSE, RF Government grant, ag. № 14.641.31.0001}
\author{A. Uribe}
\address{Mathematics Department\\
University of Michigan\\Ann Arbor, Michigan 48109}
\email{uribe@umich.edu}
\thanks{A.U. supported in part by NSF grant DMS-0805878.}

\date{\today}
\begin{abstract}
Let $(M, \omega, J)$ be a K\"ahler manifold and $\calK$ its group of Hamiltonian symplectomorphisms.
The complexification of $\calK$ introduced by Donaldson is not a group, only a ``formal Lie group".  However, it still
makes sense to talk about the exponential map in the complexification.    In this note 
we show how to
construct geometrically the exponential map (for small time), in case the initial data are real-analytic.  The construction
is motivated by, but does not use, semiclassical analysis.
\end{abstract}

\maketitle
\tableofcontents

\section{Introduction and statement of the results}
A real Lie algebra can be easily complexified by tensoring it with the complex numbers over the 
field of real numbers, and extending the Lie bracket bilinearly.  Complexifying a Lie group is a much more subtle 
problem, for which a solution does not always exist.  All compact Lie groups admit a 
complexification,  
but the proof of that is not trivial (see, for example \cite{Zhelobenko}, \S 106).

Let $(M,\omega, J)$ be a K\"ahler manifold, 
and let $\calK$ denote the group of Hamiltonian 
symplectomorphisms of $(M,\omega)$, with Lie algebra $C^\infty(M,\bbR)/\bbR$.   
$\calK$ is  known to be ``morally" an infinite-dimensional analogue of  a compact group, 
and one can wonder 
whether it has 
a complexification.  (One can rigorously prove
that $\calK$ is a diffeological Lie group, \cite{IZ}.)
From a physical point of view this would correspond to finding a sensible way 
to associate a dynamical system to a complex-valued Hamiltonian, $h:M\to\bbC$, 
in a manner that extends the notion of Hamilton flow in case $h$ is real-valued.

This issue was raised and taken on by Donaldson in a series of papers beginning with
\cite{Do}, \cite{Do2}, in connection with a set 
of lovely problems in K\"ahler geometry, and has generated a lot of research.  From this point of view, the interest in finding 
a complexification of {\em Ham} is based on the fact, discovered independently by Atiyah and Guillemin and 
Sternberg, that if a compact Lie group acts in a Hamiltonian fashion on a K\"ahler manifold then the the action extends 
to the complexified group in an interesting way that can be understood.  Donaldson pointed out that {\em Ham} acts 
on the infinite-dimensional space of K\"ahler potentials of $(M, \omega_0)$, and the moment map 
is a Hermitian scalar curvature.  The appeal of extending the finite-dimensional picture 
to this infinite-dimensional setting of K\"ahler metrics is that it was hoped to 
lead to a way of constructing
extremal metrics.  In this note we will not enter into the connections with K\"ahler geometry, 
other than to point out here that
exponentiating purely-imaginary Hamiltonians leads to geodesics in the space of K\"ahler
potentials.  

Our focus is on the following issue:  Donaldson has put forward a notion of ``formal Lie group"  to conceptualize his notion of a formal 
complexification of {\em Ham}.  Briefly, a formal Lie group with Lie algebra $\frakG$ is a 
``manifold" (the notion is of interest
only in infinite dimensions), $\calG$, together with a trivialization of its tangent bundle of the
form $T\calG \cong \calG\times\frakG$, such that the map $\frakG\ni h\mapsto $corresponding
vector field $h^\sharp$ on $\calG$ is a Lie algebra homomorphism (with respect to the 
commutator of vector fields).  The vector fields $h^\sharp$ should be thought of as ``left-
invariant", though no group structure on $\calG$ exists.  (Again this definition can be made completely
rigorous in the language of diffeologies, \cite{IZ}.)

\medskip
In the present case
$\frakG = C^\omega(M,\bbC)$ and the exponential map in our title refers to
the problem of constructing the flow of the fields $h^\sharp$. 
Corollary \ref{CorMain} states that the family of
diffeomorhpisms $\{f_t\}$ there exponentiate the complex-valued Hamiltonian $h$, in one of Donaldson's models for the complexification of $\calG$ (see \cite{Do2} \S 4).
 
 \medskip
Our construction is based on the following existence result:

\begin{proposition}\label{conditions}
Let $(M, J, \omega)$ be a real analytic K\"ahler manifold of real dimension $2n$.
There exists a holomorphic complex symplectic manifold $(X,I)$ of complex dimension $2n$
and an inclusion $\iota: M\hookrightarrow X$ such that $\iota^*\Omega = \omega$, and
with the following additional structure:
\begin{enumerate}
\item An anti-holomorphic involution $\tau: X\to X$ whose fixed point set is the image of $\iota$ and such that $\tau^*\Omega = \overline{\Omega}$.
\item A holomorphic
projection $\Pi: X\to M$, $\Pi\circ\iota = \mathrm{Id}_M$, whose fibers are holomorphic Lagrangian submanifolds.
\end{enumerate}
\end{proposition}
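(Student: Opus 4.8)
\emph{Proof proposal.} The plan is to realize $X$ concretely as a neighborhood of a totally real copy of $M$ inside the product $M\times\overline{M}$, where $\overline{M}=(M,-J)$ denotes the conjugate complex manifold and $c\colon M\to\overline{M}$ is the canonical anti-biholomorphism (the identity on underlying points). I would set $Y=M\times\overline{M}$, a complex manifold of complex dimension $2n$ with the product complex structure $I$, and define $\iota\colon M\to Y$ by $\iota(p)=(p,c(p))$; its image is a totally real (maximal) real-analytic submanifold. The anti-holomorphic involution $\tau(p,q)=(c^{-1}(q),c(p))$ then has fixed-point set exactly $\iota(M)$, and the first projection $\Pi=\mathrm{pr}_1\colon Y\to M$ is holomorphic with $\Pi\circ\iota=\mathrm{id}_M$. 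In this way the underlying manifold, the involution of (1), and the projection of (2) are produced at the outset; the entire content is to manufacture a holomorphic symplectic form $\Omega$, defined near $\iota(M)$, compatible with all of this.

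Next I would construct $\Omega$ by complexifying a Kähler potential. In holomorphic coordinates $z$ on a chart $U\subset M$ write $\omega=i\partial\bar\partial\phi$ with $\phi$ real-analytic; since $\phi$ is real-analytic in $(z,\bar z)$ it admits a unique holomorphic extension $\Phi(z,\zeta)$, where $\zeta$ is the holomorphic coordinate on $\overline{U}$ (so that $\zeta=\bar z$ along $\iota(M)$), and I would set
\[
\Omega_U=i\sum_{j,k}\frac{\partial^2\Phi}{\partial z^j\,\partial\zeta^k}\,dz^j\wedge d\zeta^k .
\]
This $\Omega_U$ is holomorphic and closed, being $d$ of the holomorphic one-form $i\sum_k\Phi_{\zeta^k}\,d\zeta^k$. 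Restricting to $\iota(M)$, where $\zeta=\bar z$, recovers $\iota^*\Omega_U=i\sum\phi_{z^j\bar z^k}\,dz^j\wedge d\bar z^k=\omega$; restricting to a fiber $\{z\}\times\overline{U}$, where $dz=0$, gives $0$, so the fibers of $\Pi$ are isotropic of complex dimension $n=\tfrac12\dim_\bbC X$, hence holomorphic Lagrangian. Non-degeneracy holds along $\iota(M)$ because $\det(\phi_{z\bar z})\neq0$, and therefore on a neighborhood by continuity.

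The crux is that these local forms are independent of the chosen potential and glue. On an overlap the potentials differ by a pluriharmonic function, $\phi_\alpha-\phi_\beta=h+\overline{h}$ with $h$ holomorphic; its complexification is $h(z)+\bar h(\zeta)$, a sum of a function of $z$ alone and a function of $\zeta$ alone. Since the mixed derivative $\partial_z\partial_\zeta$ annihilates such a sum, $\Omega_\alpha=\Omega_\beta$ on the overlap --- not merely up to a coboundary, but identically. The local forms therefore patch to a single holomorphic two-form $\Omega$ on a neighborhood of $\iota(M)$ in $Y$. I expect this patching --- equivalently, the interplay between the pluriharmonic ambiguity of the Kähler potential and the operator $\partial_z\partial_\zeta$ --- to be the main point to organize cleanly, and it is simultaneously what forces the fibers of $\Pi$ to be Lagrangian.

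It then remains to verify $\tau^*\Omega=\overline{\Omega}$ and to fix the domain. Both $\tau^*\Omega$ and $\overline{\Omega}$ are anti-holomorphic two-forms, and along the totally real submanifold $\iota(M)$ they agree: $\iota^*\tau^*\Omega=\iota^*\Omega=\omega$ because $\tau$ restricts to the identity there, while $\iota^*\overline{\Omega}=\overline{\iota^*\Omega}=\overline{\omega}=\omega$. By the identity principle for anti-holomorphic forms along a totally real submanifold they must coincide on a neighborhood. Finally, replacing the neighborhood by a smaller $\tau$-invariant one on which $\Omega$ is non-degenerate --- possible since the non-degeneracy locus is open and $\tau$-invariant --- yields the desired $(X,I,\Omega,\iota,\tau,\Pi)$ and completes the construction.
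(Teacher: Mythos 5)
Your proposal is correct and is essentially the paper's own construction: the paper takes $X$ to be a neighborhood of the diagonal in $M\times M$ with complex structure $I=(J,-J)$ (your $M\times\overline{M}$), with the swap $\tau(z,w)=(w,z)$, the first projection $\Pi(z,w)=z$, and $\Omega$ the holomorphic extension of $\omega$ from the totally real diagonal. Your explicit realization of that extension via polarized K\"ahler potentials, the pluriharmonic gluing argument, and the identity-principle verification of $\tau^*\Omega=\overline{\Omega}$ simply supply the details the paper leaves implicit in calling the local existence ``simple.''
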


The local existence is simple:  We take $X$ to be a neighborhood of the diagonal in $M\times M$, 
with the complex structure $I=(J, -J)$.  $\Omega$ is the holomorphic extension of $\omega$ and 
$\tau(z,w) = (w,z)$.  Finally, the projection is simply $\Pi(z,w) = z$.  However, there
exist natural complexifications that make our results below much more global in some cases.  

For example, if $M$ is a generic coadjoint orbit of a compact simply connected
Lie group $G_0$ (one through the interior of a Weyl chamber)  then, one can
take for $X$ the orbit of the complexification, $G$, of $G_0$ through the same element (see \S 3 for details).

\medskip
To describe our results we need some notation.
Given a function $h:M\to\bbC$ whose real and imaginary parts are real analytic, there is
a holomorphic extension $H:X\to \bbC$  perhaps
only defined near $\iota(M)$, but we will not make a notational distinction between $X$ and 
such a neighborhood, as our results are local in time.

\medskip
Let $h$ and $H$ be as above.  
The fibers of $\Pi$ are the leaves of a holomorphic foliation, $\calF$, of $X$. 
Denote by $\Phi_t: X\to X$ the Hamilton
flow of $\Re H$ (where $\Re H$ is the real part of $H$) with respect to the real part of $\Omega$.
We denote by $\calF_t$
the image of the foliation $\calF$ under $\Phi_t$ (so that $\calF_0 = \calF$).
 We assume
that there exists an open interval $E\subset\bbR $ containing the origin such that for all $t\in E$, the leaves of 
$\calF_t$ are the fibers of a projection $\Pi_t: X\to M$.
We will denote
\[
\calF_t^x := \Pi_t^{-1}(x)
\]
the fiber of $\calF_t$ over $x$.

\begin{theorem}\label{Main}
Let $E\subset\bbR $ be an open set as above.  
Let $\phi_t:M\to M$ be defined by 
\begin{equation}\label{lapapa}
\phi_t := \Pi_t\circ\Phi_t\circ\iota.
\end{equation}
Then, $\forall t\in E$:
\begin{enumerate}
\item There is a complex structure $J_t:TM\to TM$ such that $J_t\circ d\Pi_t = d\Pi_t\circ I$.
\item $\forall x\in M\  \dot{J_t}(x) := \frac{d\ }{dt}J_{t,x}: T_xM\to T_xM$ satisfies
the equation
\begin{equation}\label{jDotEq}
 \dot{J_t} = - \calL_{\Xi^{\omega}_{\Re h} + J_t\left(\Xi^{\omega}_{\Im h}\right)} J_t, \quad
 J_t|_{t=0} = J.
\end{equation}
\item  The infinitesimal generator of $\phi_t$ is
\begin{equation}\label{main}
\dot{\phi}_t\circ \phi_t^{-1}= \Xi^{\omega}_{\Re h} + J_t\left(\Xi^{\omega}_{\Im h}\right)
\end{equation}
where $\Xi^\omega_{\Re h}$, denotes the Hamilton vector field of $\Re h$ with respect to $\omega$, etc.
\end{enumerate}

\end{theorem}

\begin{remark}\label{CK}
Conditions (2) and (3) (together with the initial condition $\phi_0 = \mathrm{Id}_M$) characterize the family $\{\phi_t\}$.
It is in fact not hard to see that in local real-analytic coordinates $u_j$ on $M$, 
these conditions show that the family solves a system of first-order non-linear PDEs of the form
\[
\dot{U}_j(u,t) = F_j(U, U_{u_k})
\]
with $U_j(u, 0) = u_j$ and where the $F_j$ are real-analytic.  Therefore local existence and uniqueness follow
from the Cauchy-Kowalewsky theorem.
\end{remark}

\begin{remark}\label{Jdot}
In light of (\ref{main}), condition (\ref{jDotEq}) is equivalent to $\phi_t^*J_t = J$, i.e.
\begin{equation}\label{actuallyShow}
\forall t\qquad
\phi_t: (M, J)\to (M,J_t)\quad\text{is holomorphic,}
\end{equation}
that is, $J_t\circ d\phi_t = d\phi_t\circ J$.  Indeed
 \begin{equation}\label{}
\frac{d\ }{dt} \phi_t^*\left(J_t\right) = \phi_t^*\left(\dot J_t\right) + \phi_t^* \left(\calL_{\Theta_t}J_t\right),
\end{equation}
where
\begin{equation}\label{Xi}
\Theta_t = \Xi^{\omega}_{\Re h} + J_t\left(\Xi^{\omega}_{\Im h}\right)
\end{equation}
and the Lie derivative $\calL_{\Theta_t}J_t$ is taken in the usual way (with $t$ fixed).
In \S 3.1 we will write equation (\ref{jDotEq}) in local coordinates.
\end{remark}

\begin{remark}\label{CompactGroup}
Suppose $G$ is a compact Lie group acting on $M$ in a Hamiltonian fashion and preserving $J$.  Then, 
the action extends to a holomorphic action to the complexification $G_\bbC$ (c.f. \cite{GS}, \S 4).  The 
extended action is as follows:  If
$a,b :C^\infty(M)\to\bbR$ are two components of the moment map of the $G$ action, then the infinitesimal action
corresponding to $a+ib$ is the vector field $\Xi^\omega_a + J(\Xi^\omega_b)$.  The corresponding
one-parameter group of diffeomorphisms,  $\varphi_t: M\to M$, satisfies (2) and (3) of Theorem \ref{Main},
with $J_t = J_0$ for all $t$.  By the uniqueness part of the previous remark, we must have $\varphi_t = \phi_t$.
In other words, our construction is an extension of the process of complexifying the action of a compact
group of symmetries of $(M, \omega, J)$.
\end{remark}

We now explain the geometry behind the construction of $\phi_t$ summarized by (\ref{lapapa}).  
To find the image of $x\in M$ under $\phi_t$, one flows the 
leaf $ \calF_0^x=\Pi^{-1}(x)$ of the foliation $\calF=\calF_0$ by $\Phi_t$, and
intersects the image leaf with $M$.  In other words, (\ref{lapapa}) can be stated equivalently as:
\begin{equation}\label{lapapa2}
 \{ \phi_t(x)\} = \Phi_t\left(\Pi^{-1}(x)\right) \cap M .
\end{equation}
The definition of $\phi$ is summarized in the following figure, where $\calF_t^y := \Pi_t^{-1}(y)$:

\medskip

\begin{center}
\includegraphics[scale=.3]{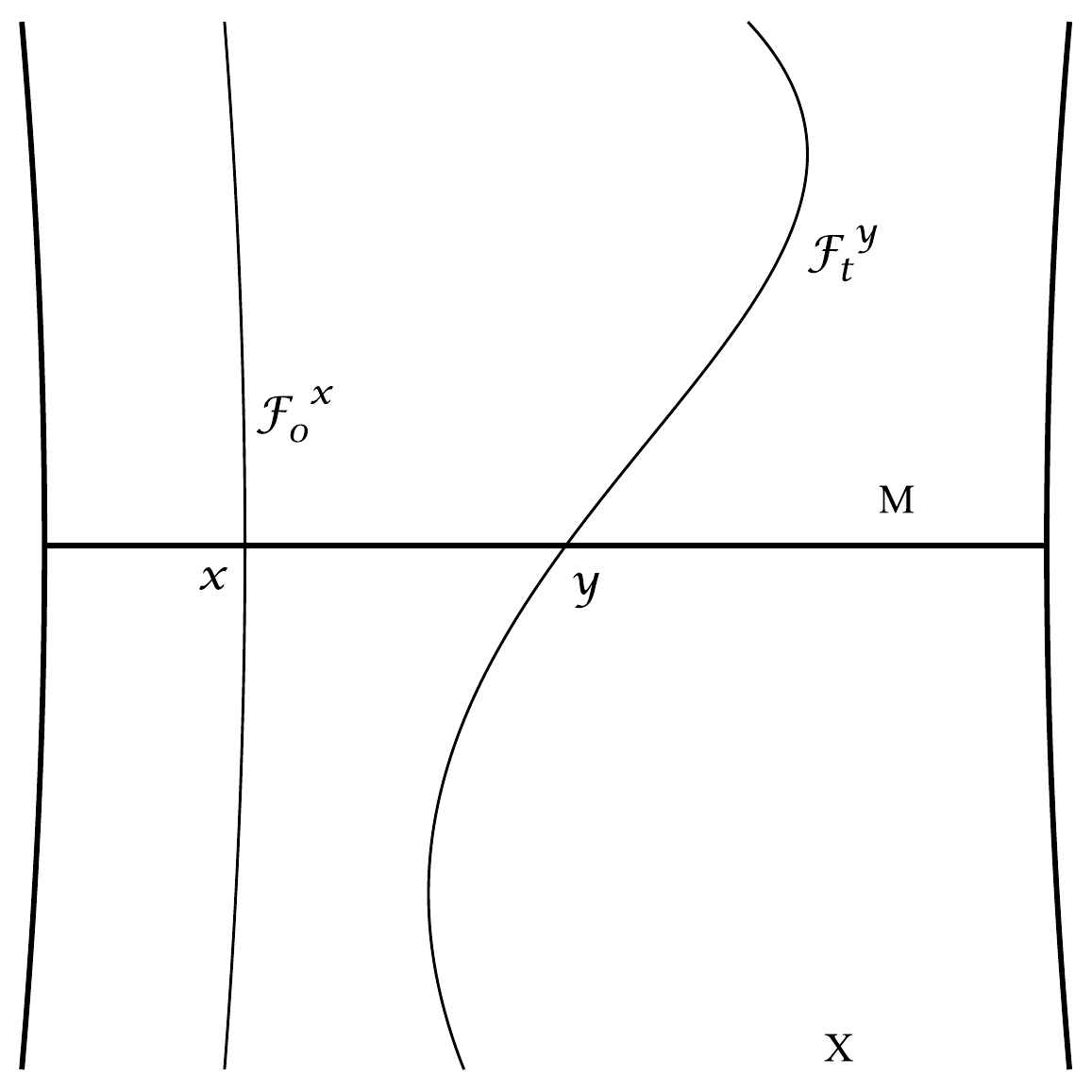}

\medskip
In this figure $M$ is represented by the horizontal segment.  Under $\Phi_t$ the fibers of the
foliation $\calF_0$ are transformed into the fibers of $\calF_t$, and $y = \phi_t(x)$.
\end{center}

\bigskip
As we will see, the following is an easy consequence of the previous result:
\begin{corollary}\label{CorMain}
Let $f_t := \Pi_0\circ \Phi_t\circ \iota:  M\to M$ and $\omega_t$ the symplectic
form defined by $\omega = f_t^*\omega_t$.  Then
\begin{equation}\label{fCoro}
\dot{f}_t\circ f_t^{-1}= \Xi^{\omega_t}_{\Re h\circ f_t^{-1}} + J_0\left(\Xi^{\omega_t}_{\Im h\circ f_t^{-1}}\right),
\end{equation}
where $\Xi^{\omega_t}_{\Re h\circ f_t^{-1}}$, denotes the Hamilton vector field of $\Re h$ with respect to $\omega_t$, etc.
\end{corollary}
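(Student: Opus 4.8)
The plan is to differentiate $f_t$ directly and then identify the resulting vector field using the same local analysis that underlies Theorem \ref{Main}. The point that makes this an \emph{easy} consequence is that in $f_t=\Pi_0\circ\Phi_t\circ\iota$ only the flow $\Phi_t$ depends on $t$, whereas in $\phi_t=\Pi_t\circ\Phi_t\circ\iota$ both $\Pi_t$ and $\Phi_t$ do. Writing $W:=\Xi^{\Re\Omega}_{\Re H}$ for the (time-independent) infinitesimal generator of $\Phi_t$ and $M_t:=\Phi_t(\iota(M))$, the chain rule gives
\[
\dot f_t\circ f_t^{-1} \;=\; d\Pi_0(W)\big|_{M_t},
\]
meaning the vector field on $M$ whose value at $y$ is $d\Pi_0\big(W(q_y)\big)$, where $q_y\in M_t$ is the unique point with $\Pi_0(q_y)=y$. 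Thus the whole statement reduces to computing the pushforward of the Hamilton field $W$ under the \emph{fixed} holomorphic Lagrangian projection $\Pi_0$, evaluated along $M_t$; since $\Pi_0$ carries $I$ to $J_0$, it is $J_0$ (not $J_t$) that appears.

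Next I would record the invariance properties of $\Phi_t$. Because $W=\Xi^{\Re\Omega}_{\Re H}=\Xi^{\Im\Omega}_{\Im H}$ (the two characterizations coincide for a holomorphic $H$ and a holomorphic-symplectic $\Omega$), Cartan's formula gives $\calL_W\Re\Omega=d\,d\Re H=0$ and $\calL_W\Im\Omega=d\,d\Im H=0$, so $\Phi_t$ preserves both $\Re\Omega$ and $\Im\Omega$; moreover $\Re H$ and $\Im H$ Poisson-commute and are therefore conserved along the flow. Combined with $\iota^*\Re\Omega=\omega$ and $\iota^*\Im\Omega=0$, this shows that $M_t$ is, like $\iota(M)$, symplectic for $\Re\Omega$ and Lagrangian for $\Im\Omega$, and that the diffeomorphism $\Pi_0|_{M_t}\colon M_t\to M$ transports $\Re\Omega|_{M_t}$ to $\omega_t=(f_t)_*\omega$ and the functions $\Re H|_{M_t},\ \Im H|_{M_t}$ to $\Re h\circ f_t^{-1},\ \Im h\circ f_t^{-1}$ respectively.

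Finally, the crux is the pointwise identity
\[
d\Pi_0(W)\big|_{M_t}\;=\;\Xi^{\omega_t}_{\Re h\circ f_t^{-1}}+J_0\big(\Xi^{\omega_t}_{\Im h\circ f_t^{-1}}\big),
\]
which is exactly the computation carried out in the proof of Theorem \ref{Main}, now performed for $\Pi_0$ in place of $\Pi_t$ and along $M_t$ in place of $\iota(M)$. The tangential part of $W$ along $M_t$ is the intrinsic Hamilton field of $\Re H|_{M_t}$ for $\Re\Omega|_{M_t}$, which by the transport facts above projects to $\Xi^{\omega_t}_{\Re h\circ f_t^{-1}}$; and since $\Pi_0$ is holomorphic one has $d\Pi_0(IW)=J_0\,d\Pi_0(W)$, so the remaining (normal) contribution, controlled by $\Im H$, is forced into the form $J_0\Xi^{\omega_t}_{\Im h\circ f_t^{-1}}$. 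I expect this last step to be the only real work: the term $J_0\Xi^{\omega_t}_{\Im h\circ f_t^{-1}}$ arises from the component of $W$ transverse to $M_t$, and pinning it down cleanly uses both the $I$-invariance of the fibers of $\Pi_0$ and the fact that $M_t$ is $\Im\Omega$-Lagrangian; everything else is bookkeeping. Once this identity is in hand the Corollary follows immediately, and as a consistency check one may note that it matches Theorem \ref{Main} at $t=0$, where $f_0=\mathrm{id}$, $\omega_0=\omega$ and $J_0=J$.
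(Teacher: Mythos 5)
Your overall route---differentiating $f_t=\Pi_0\circ\Phi_t\circ\iota$ directly, so that only the fixed projection $\Pi_0$ moves the derivative onto the generator $W$, and reducing the Corollary to the pointwise identity $d\Pi_0(W)\big|_{M_t}=\Xi^{\omega_t}_{\Re h\circ f_t^{-1}}+J_0\bigl(\Xi^{\omega_t}_{\Im h\circ f_t^{-1}}\bigr)$ along $M_t:=\Phi_t(\iota(M))$---is genuinely different from the paper's, and your scaffolding is correct: $W=\Xi^{\Re\Omega}_{\Re H}=\Xi^{\Im\Omega}_{\Im H}$, both $\Re\Omega,\Im\Omega$ and both $\Re H,\Im H$ are preserved by $\Phi_t$, $M_t$ is $\Re\Omega$-symplectic and $\Im\Omega$-Lagrangian, and $\Pi_0|_{M_t}$ transports $\Re\Omega|_{M_t}$ to $\omega_t$ and $\Re H|_{M_t},\Im H|_{M_t}$ to $\Re h\circ f_t^{-1},\Im h\circ f_t^{-1}$. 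The paper avoids all of this: it observes $f_t=(\phi_{-t})^{-1}$, differentiates $f_t\circ\phi_{-t}=\mathrm{id}$, and pushes the generator formula of Theorem \ref{Main} forward by $df_t$, using naturality of Hamilton fields under the symplectomorphism $f_t:(M,\omega)\to(M,\omega_t)$ and $df_t\circ J_{-t}=J_0\circ df_t$ (holomorphy of $\phi_{-t}$); no new computation on $X$ is needed, which is exactly why it is an ``easy consequence.''

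However, your crux step has a genuine gap. The claim that the identity ``is exactly the computation carried out in the proof of Theorem \ref{Main}'' is not accurate: that computation takes place at points $y=\phi_t(x)\in M$ and rests on Lemma \ref{realcase} (tangency of $\xi$, resp.\ $I\xi$, to $M$, proved via the involution $\tau$ whose fixed set is $M$) plus $\bbR$-linearity in $h$. Along $M_t$ both tools fail: $M_t$ is fixed by $\Phi_t\circ\tau\circ\Phi_{-t}$, under which $\Re H$ is not invariant for complex $h$, and linearity breaks because $M_t$, $f_t$, $\omega_t$ depend nonlinearly on $h$. Worse, your decomposition claim is false under its natural reading: if ``tangential part of $W$'' means the $TM_t$-component in the splitting $TX=TM_t\oplus T\calF_0$ adapted to $\Pi_0$, take $M=\bbC$, $\omega=du\wedge dv$, $h=i\Re z^2$; then $\Re H|_{M_t}\equiv 0$ (it is conserved and vanishes on $M$), so its intrinsic Hamilton field is zero, yet $d\Pi_0(W)=\dot f_t\circ f_t^{-1}\neq 0$, so the tangential component (which has the same $d\Pi_0$-image, the fiber component being killed) is nonzero---and in this splitting the ``normal contribution'' would project to zero, so the $J_0$-term could not arise at all. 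The statement that is true, and that completes your outline, uses instead the splitting $T_qX=T_qM_t\oplus I\,(T_qM_t)$: writing $W=\xi_1+I\xi_2$ with $\xi_1,\xi_2\in T_qM_t$, the identities $\omega_1(Iu,v)=-\omega_2(u,v)$ (since $\Omega$ is of type $(2,0)$), $\omega_1\rfloor I\,W=-d\Im H$ from (\ref{hamfi}), and $\omega_2|_{M_t}=0$ give
\begin{equation*}
\xi_1=\Xi^{\,\omega_1|_{M_t}}_{\Re H|_{M_t}},\qquad
\xi_2=\Xi^{\,\omega_1|_{M_t}}_{\Im H|_{M_t}},
\end{equation*}
after which $d\Pi_0$, holomorphy of $\Pi_0$, and your transport facts yield the Corollary. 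So your approach is completable, but the step you deferred as ``the only real work'' really is unproven as written: it requires this short lemma---a genuine extension of Lemma \ref{realcase} from $M$ to $M_t$---and cannot be obtained by citing the proof of Theorem \ref{Main}.
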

It is not hard to check that $f_t = (\phi_{-t})^{-1}$, therefore in the (relatively rare) cases when $\phi_t$ is a
one-parameter subgroup of diffeomorphisms one has $f_t = \phi_t$.

\medskip
The present construction is motivated by semiclassical analysis.  Ignoring (possibly
catastrophic) domain issues, the notion of the exponential of a non-Hermitian {\em quantum} 
Hamiltonian is clear.
(For example, if $M$ is compact and Planck's constant is fixed, this amounts to exponentiating
a matrix.)  Therefore, a very natural approach to exponentiating a non-Hermitian
classical Hamiltonian is to first quantize it, exponentiate it on the quantum side, and then
take the semiclassical limit.  This approach has been developed by Rubinstein and Zelditch,
\cite{RZ1, RZ2}, and raises a number of interesting but difficult analytical questions.  
In our construction, we bypass these analytic difficulties by considering the following
geometric remnants
of quantization:  Each leaf of the foliation $\calF_0$ corresponds to a quantum state
(element of the projectivization of the quantum Hilbert space)
represented by a {\em coherent state} centered at a point on the leaf, that is,
an element in the Hilbert space that semiclassically concentrates at the intersection
of the leaf with $M$.  The fact that $\Pi$ is holomorphic
says the the coherent states are associated to the metric of $(M,J,\omega)$.  
On the quantum side, the evolution of a coherent
state remains a coherent state, whose Lagrangian is simply the image of the one at time
$t=0$ by the complexified classical flow.  As explained above, our maps $\{\phi_t\}$
simply follow the evolution of the real center.


We mention that the paper \cite{GrSc} by Graefe and Schubert contains a very 
clear and detailed account of the case when $M$ is 
equal to $\bbR^{2n}\cong\bbC^n$, $h$ is a quadratic 
complex Hamiltonian, and the Lagrangian foliations are given by complex-linear positive subspaces,
corresponding to standard Gaussian coherent states with possibly complex centers.  
By explicit calculations on both quantum and classical sides, they show that the the evolution
of a coherent state centered at $x$ is another coherent state whose center may be complex, but
that represents the same quantum state as a suitable Gaussian coherent state centered at 
$\phi_t(x)$. 

Finally, we mention that other authors have examined the situation treated here from different points of view. Hall and Kirwin \cite{HK}, developing an earlier observation of Thiemann \cite{TT}, use imaginary time dynamics to alter the complex structure on the classical phase space $T^*M$, generalizing Grauert tube constructions. Kirwin, Mour\~ao, Nunes \cite{KMN} used complexified dynamics, especially on toric varieties, to study the relation between real and complex polarizations in geometric quantization. There, the degenerations we describe here in terms of $\phi_t, J_t$ above, are reflected in the degeneration of positivity in K\"ahler polarizations, leading to ``islands" of non-positivity for the Hilbert structure. The idea that one can either change the complex structure, or equivalently, the K\"ahler metric, seems to go back to Semmes \cite{SS}, rediscovered and set in a very interesting context in K\"ahler geometry, by Donaldson \cite{Do2}. It was Donaldson's casting of the situation that was our point of departure. It is an interesting question as to whether further advances in K\"ahler geometry and quantization of non-Hermitian Hamiltonians have deeper connections developing the more formal relations exploited here.

%
%
%
%
%
%
%
%
%
%
%
%
%
%
%
%
%
%
%
%

\subsection{Relation with geodesics in the space of K\"ahler potentials}

We start again with $(M,\omega)$ as above.  Consider the space
\[
\calH := \{ a:M\to\bbR\;;\; \omega_a := \omega + i\bar\partial\partial a > 0\}/\bbR
\]
of K\"ahler potentials for K\"ahler forms $\omega_a$, in the same cohomology class as $\omega$.

\begin{lemma}
Given $h= F+iG: M\to\bbC$, let $f_t$ be its exponential.   Let $\omega_t$ be the
symplectic form defined by $f_t^*\omega_t = \omega$, and write
$\omega_t = \omega + i\bar\partial\partial a_t$, where the $a_t$ are taken modulo constants.
Then 
\begin{equation}\label{unopuntocinco}
f_t^*\dot a_t = 2G.
\end{equation}
\end{lemma}
\begin{proof}
Differentiating
\[
f_t^*[\omega_t] = \omega,
\text{where} \ \omega_t = \omega + i\bar\partial\partial a_t
\]
with respect to time,  we get
\[
0 = f_t^*[\calL_{V_t}\omega_t + \dot\omega_t] =  f_t^*\left(d(V_t\rfloor\omega_t )+ 
i\bar\partial\partial \dot a_t\right)
\]
where $V_t = \dot{f}_t\circ f_t^{-1}$.
Therefore $d(V_t\rfloor\omega_t )+ i\bar\partial\partial \dot a_t=0$.
Recalling that 
$ \dot{f}_t\circ f_t^{-1} = \Xi^{\omega_t}_{F\circ f_t^{-1}} + J_0\left(\Xi^{\omega_t}_{G
\circ f_t^{-1}}\right)$, one computes:
\[
d(V_t\rfloor\omega_t ) = d^2(F\circ f_t^{-1}) - 
d\left(\omega_t(\Xi^{\omega_t}_{G\circ f_t^{-1}}\, , \,J_0(\cdot))\right) = 
-d(G\circ f_t^{-1})\circ J_0.
\]
This leads to the identity
\[
i\bar\partial\partial \dot a_t = d(G\circ f_t^{-1})\circ J_0.
\]
However $d(G\circ f_t^{-1})\circ J_0 = 2i\bar\partial\partial (G\circ f_t^{-1})$, 
therefore $i\bar\partial\partial \dot a_t = 2i\bar\partial\partial (G\circ f_t^{-1})$.
Since we are working modulo constants, the result follows.
\end{proof}

\medskip

$\calH$ has a natural Riemannian metric:
\[
\norm{\delta a}^2 = \int_M |\delta a|^2 d\mu_a.
\]

\begin{corollary}
Let $f_t$ be the exponential of a purely-imaginary Hamiltonian, $h = iG$, 
$G:M \to \bbR$.
Then the curve of potentials $a_t$
of the metrics $(\omega_t, J)$ is the geodesic on $\calH$, with initial condition
\[
\dot a_0 = 2G.
\]
\end{corollary}
\begin{proof}
The geodesic equation turns out to be
\[
\ddot{a} = -\frac{1}{2}|\nabla^a \dot a|_a^2.
\]
Differentiating (\ref{unopuntocinco}) with respect to time, get
\[
0 = \frac{d\ }{dt} f_t^*\dot a_t = f_t^*[\ddot a_t + d(\dot a_t)(V_t)],
\]
and therefore
\[
\ddot a_t = - d(\dot a_t)(V_t) = -d(\dot a_t)(\nabla^t(G\circ f_t^{-1}) = 
-d(\dot a_t)(\nabla^t\frac{1}{2} \dot a_t) = -\frac{1}{2}|\nabla^t \dot a_t|^2_t.
\]
\end{proof}

{\sc Note:}  The factor of two in the equation $\dot a_0 = 2G$ can be gotten
rid of by letting
\[
\omega_a = \omega_0 +2 i\bar\partial\partial a.
\]

\section{The proofs of Theorem \ref{Main} and Corollary \ref{CorMain}.}

Let $\xi$ denote the infinitesimal generator of $\Phi_t$.  
The following is easy to check in a local trivialization of the foliation of $X$
by fibers of $\Pi_t$:

\begin{lemma}\label{Cool}
Fix $t\in E$ and $x\in M$, and let $y=\phi_t(x)$ and $\xi$ denote
the infinitesimal generator of the 
flow $\Phi$.  Then  $\dot{\phi_t}(x)\in T_yM$ is 
\begin{equation}\label{cool}
\dot{\phi_t}(x) = d\left(\Pi_t\right)_y(\xi_y).
\end{equation}
\end{lemma}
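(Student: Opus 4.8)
The plan is to differentiate the defining relation $\phi_t(x)=\Pi_t(\Phi_t(\iota(x)))$ in $t$, but only after rewriting it so that the point at which I differentiate lies on $M$. First I would record the normalization implicit in the construction: each leaf of $\calF_t$ meets $\iota(M)$ in a single point, and $\Pi_t$ labels that leaf by that point, so that $\Pi_t\circ\iota=I_M$ for every $t\in E$ (this is the natural extension of $\Pi_0\circ\iota=\Pi\circ\iota=I_M$, and it is exactly the normalization that makes \eqref{lapapa2} hold). Note that differentiating \eqref{lapapa} directly is awkward: writing $q_t:=\Phi_t(\iota(x))$ and using $\dot q_t=\xi_{q_t}$, the chain rule produces $\dot\phi_t(x)=\partial_s\Pi_s(q_t)|_{s=t}+d(\Pi_t)_{q_t}(\xi_{q_t})$, which carries a spurious ``$\partial_s\Pi_s$'' term and is evaluated at the off-$M$ point $q_t$ rather than at $\iota(y)$. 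The trick is to remove both defects at once.

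The key step is the reparametrized identity
\[
\phi_{t+\eps}(x)=\Pi_{t+\eps}\bigl(\Phi_\eps(\iota(y))\bigr),\qquad y:=\phi_t(x).
\]
To see it, write $\Phi_{t+\eps}=\Phi_\eps\circ\Phi_t$, so $\Phi_{t+\eps}(\iota(x))=\Phi_\eps(q_t)$. By definition $\Pi_t(q_t)=y$, and by the normalization $\Pi_t(\iota(y))=y$, so $q_t$ and $\iota(y)$ lie on the same leaf $\calF_t^y=\Pi_t^{-1}(y)$. Since $\calF_{t+\eps}=\Phi_\eps(\calF_t)$, the set $\Phi_\eps(\calF_t^y)$ is a single leaf of $\calF_{t+\eps}$ containing both $\Phi_\eps(q_t)$ and $\Phi_\eps(\iota(y))$; as $\Pi_{t+\eps}$ is constant on it, $\Pi_{t+\eps}(\Phi_\eps(q_t))=\Pi_{t+\eps}(\Phi_\eps(\iota(y)))$, which is the claim.

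Now I would differentiate this identity at $\eps=0$. The right-hand side depends on $\eps$ both through the index of $\Pi$ and through the moving point $\Phi_\eps(\iota(y))$, so
\[
\dot\phi_t(x)=\frac{\partial}{\partial\eps}\Big|_{0}\Pi_{t+\eps}(\iota(y))+d(\Pi_t)_{\iota(y)}\Bigl(\tfrac{d}{d\eps}\big|_{0}\Phi_\eps(\iota(y))\Bigr).
\]
The second term is exactly $d(\Pi_t)_{\iota(y)}(\xi_{\iota(y)})=d(\Pi_t)_y(\xi_y)$. The first term \emph{vanishes}: because $\Pi_s\circ\iota=I_M$ for all $s$, we have $\Pi_{t+\eps}(\iota(y))=y$ independently of $\eps$. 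This yields \eqref{cool}. The whole point of the reparametrization is that it moved the base point to $\iota(y)\in\iota(M)$, where the otherwise troublesome explicit $t$-derivative of $\Pi_t$ is killed by the normalization.

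The only real obstacle is to make rigorous the smooth joint dependence on $(t,\cdot)$ of the family $\Pi_t$ and the legitimacy of the two-term chain rule above. This is precisely what the local trivialization of the foliation of $X$ by the fibers of $\Pi_t$ provides: choosing coordinates $(a,b)$ on $X$ (smoothly in $t$) in which $\Pi_t$ is the projection onto the base coordinate $a$ and $\iota(M)$ is a section, both the splitting of the derivative and the vanishing of the index term become coordinate computations. I would therefore include a brief verification that such a trivialization exists for $t$ near a fixed $t_0$, using that $\iota(M)$ is transverse to the leaves of $\calF_t$ and that $\Phi$ depends smoothly on time.
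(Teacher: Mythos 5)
Your proof is correct and rests on the same key identity as the paper's, namely $\phi_{t+\eps}(x)=\Pi_{t+\eps}\bigl(\Phi_\eps(\iota(y))\bigr)$ obtained from the group law of $\Phi$, but you execute the differentiation by a genuinely different bookkeeping, and the difference is worth recording. The paper never differentiates the family $\Pi_{t+\eps}$ in $\eps$: it fixes coordinates $(u,v)$ adapted to $\calF_t$ at the single time $t$, realizes $\calF^y_{t+\eps}$ as the $\Phi_\eps$-image of the $v$-axis, and locates $\phi_{t+\eps}(x)$ by intersecting that moving leaf with $\{v=0\}$ via an implicit function $v(\eps)$; the correction term there is proportional to $\dot v(0)$ and dies because $\partial U/\partial v=0$ at $\eps=0$, i.e.\ because $\Phi_0=\mathrm{id}$. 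You instead apply a two-slot chain rule to $(\eps,p)\mapsto\Pi_{t+\eps}(p)$ and kill the term $\partial_\eps|_0\,\Pi_{t+\eps}(\iota(y))$ by the normalization $\Pi_s\circ\iota=I_M$ --- a normalization that is indeed implicit in the paper (it is exactly what makes \eqref{lapapa} equivalent to \eqref{lapapa2}), and which you rightly make explicit. The price of your organization is precisely the obstacle you flag: you must know that $(\eps,p)\mapsto\Pi_{t+\eps}(p)$ is jointly smooth, which is not literally among the paper's hypotheses; the paper's implicit-function formulation manufactures this smoothness from the smoothness of the flow rather than assuming it, and the cleanest way to discharge your flagged step is the same device --- the leaf of $\calF_{t+\eps}$ through $p$ is $\Phi_\eps$ of the $\calF_t$-leaf through $\Phi_{-\eps}(p)$, and its intersection point with $M$, which is $\Pi_{t+\eps}(p)$, depends smoothly on $(\eps,p)$ by the implicit function theorem, transversality holding because $d\Pi_t\circ d\iota=\mathrm{id}_{TM}$ forces $T_yM\oplus T_y\calF_t^y=T_yX$. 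A secondary merit of your write-up is that you actually prove the reparametrized identity by the leaf-matching argument ($q_t$ and $\iota(y)$ lie on the same $\calF_t$-leaf, and $\Phi_\eps$ carries leaves of $\calF_t$ to leaves of $\calF_{t+\eps}$, on which $\Pi_{t+\eps}$ is constant), whereas the paper asserts it with only the remark that $\Phi$ is a local one-parameter group.
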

\begin{proof}  
Introduce coordinates $(u,v)$ in a neighborhood $U\subset X$ centered at $y$
so that $U\cap M$ is defined
by $v=0$ and the projection $\Pi_t$ is just $\Pi_t(u,v) = u$.  Note that, since $\Phi$ 
is a one-parameter local subgroup of diffeomorphisms, for $s$ small enough
\[
\phi_{t+s}(x) = \Pi_{t+s}\circ \Phi_s(y),
\]
For $s$ near zero denote, the map $\Phi_s$ in coordinates by
\[
\Phi_{s}(u,v) = (U(s,u,v), V(s,u,v))
\]
(in a smaller neighborhood of $y$).
For each $s$ 
the image of the $v$-axis under $\Phi_s$ locally parametrize the fiber  $\calF^y_{t+s}$,
namely
\[
v\mapsto (U(s, 0, v), V(s, 0,v)).
\]
Therefore we can write $\phi_{t+s}(x) = U(s, 0, v(s))$,
where $v(s)$ is defined implicitly by $V(s,0, v(s)) = 0$ and
$v(0) = 0$.  It follows that
\[
\dot\phi_t(x) = \dot{U}(0, 0, 0) + \frac{\partial U}{\partial v}(0,0,0)\cdot\dot{v}(0).
\]
However $\Phi_0$ is the identity, so that $U(0,u,v) = u$ and, therefore, 
$\frac{\partial U}{\partial v}(0,0,0) = 0$.
\end{proof}

To proceed further we will need some notation.  
We regard $X$ as a real manifold of real dimension $4n$ with an integrable
complex structure $I:TX\to TX$.  Let us write 
\[
\Omega = \omega_1 + i\omega_2
\]
for the real and imaginary parts of $\Omega$.  Thus the $\omega_j$ are real symplectic forms on $X$
and $M$ is $\omega_1$-symplectic and $\omega_2$-Lagrangian.  Let us write
\begin{equation}\label{re+iim}
H = F + iG
\end{equation}
for the real and imaginary parts of $H$.  Recall that, by definition, $\xi$ is the Hamilton field
of $F$ with respect to $\omega_1$.  

\begin{lemma}
$\Xi^\Omega_{2H}=\xi - iI(\xi)$ is the holomorphic vector field on $X$ associated to the Hamiltonian $2H$ with respect to the form $\Omega$.
Therefore $\Phi_t$ is a holomorphic automorphism of $(X,\Omega)$.
\end{lemma}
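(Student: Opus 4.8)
The plan is to verify the defining contraction identity for the holomorphic Hamiltonian vector field by separating types, and then to invoke the standard correspondence between holomorphicity of a $(1,0)$-field and the flow of the underlying real field being biholomorphic. First I would record the purely algebraic point that $\xi - iI(\xi) = 2\,\xi^{1,0}$, where $\xi^{1,0} = \tfrac12(\xi - iI\xi)$ is the $(1,0)$-component of the real field $\xi$ in the decomposition $TX\otimes\bbC = T^{1,0}X\oplus T^{0,1}X$; since $\xi$ is real, $\xi^{0,1} = \overline{\xi^{1,0}}$. By definition the holomorphic Hamiltonian field of $2H$ relative to $\Omega$ is the section $Z$ of $T^{1,0}X$ with $\iota_Z\Omega = d(2H) = \partial(2H)$, the last equality because $H$ is holomorphic so $\dbar H = 0$. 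Thus the whole content of the first sentence is the identity $\iota_{\xi^{1,0}}\Omega = \partial H$.

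To establish this I would exploit the type of $\Omega$. Since $\Omega$ has type $(2,0)$ and $\overline\Omega$ type $(0,2)$, the contractions $\iota_{\xi^{0,1}}\Omega$ and $\iota_{\xi^{1,0}}\overline\Omega$ both vanish. Writing $\omega_1 = \Re\Omega = \tfrac12(\Omega + \overline\Omega)$ and contracting against $\xi = \xi^{1,0}+\xi^{0,1}$, the cross terms drop out and one is left with $\iota_\xi\omega_1 = \tfrac12\bigl(\iota_{\xi^{1,0}}\Omega + \overline{\iota_{\xi^{1,0}}\Omega}\bigr) = \Re\bigl(\iota_{\xi^{1,0}}\Omega\bigr)$. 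On the other side, differentiating $F = \Re H = \tfrac12(H + \overline H)$ and using holomorphicity of $H$ gives $dF = \Re(\partial H)$. By definition $\xi$ is the $\omega_1$-Hamilton field of $F$, i.e. $\iota_\xi\omega_1 = dF$ in the sign convention in force, so we obtain $\Re\bigl(\iota_{\xi^{1,0}}\Omega\bigr) = \Re(\partial H)$.

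The key observation now is that the real-part map is injective on $(1,0)$-forms: if $\alpha$ is of type $(1,0)$ then the $(1,0)$-part of $\Re\alpha$ is $\tfrac12\alpha$, so $\alpha$ is recovered from $\Re\alpha$. Both $\iota_{\xi^{1,0}}\Omega$ and $\partial H$ are $(1,0)$-forms, so we conclude $\iota_{\xi^{1,0}}\Omega = \partial H$, equivalently $\iota_{\xi - iI\xi}\Omega = \partial(2H)$, which is the first assertion. Because $\Omega$ is a nondegenerate holomorphic $(2,0)$-form it induces a holomorphic bundle isomorphism $T^{1,0}X \to (T^{1,0}X)^*$, $Z\mapsto \iota_Z\Omega$, whose inverse is again holomorphic; since $\partial(2H)$ is a holomorphic $1$-form, its preimage $\xi - iI\xi$ is automatically a holomorphic section of $T^{1,0}X$.

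For the concluding sentence I would use the standard fact that a real vector field whose $(1,0)$-part is a holomorphic section satisfies $\mathcal{L}_\xi I = 0$ (in holomorphic coordinates this is exactly the statement that the components of $\xi^{1,0}$ are holomorphic), so its flow $\Phi_t$ consists of biholomorphisms. That $\Phi_t$ additionally preserves $\Omega$ follows from $\mathcal{L}_\xi\omega_1 = 0$, which holds because $\xi$ is $\omega_1$-Hamiltonian; combining $\Phi_t^*I = I$ with the recovery formula $\Omega = 2\,(\omega_1)^{2,0}$ (the $(2,0)$-part of $\omega_1$ relative to $I$) gives $\Phi_t^*\Omega = 2\,(\Phi_t^*\omega_1)^{2,0} = 2\,(\omega_1)^{2,0} = \Omega$. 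The one genuinely delicate step is this last passage, from holomorphicity of the $(1,0)$-field $\xi^{1,0}$ to holomorphicity of the flow of the \emph{real} field $\xi$; the rest is bookkeeping with the type decomposition, where the choice of sign convention only affects harmless factors and signs.
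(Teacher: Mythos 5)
Your proof is correct, and while it establishes the same contraction identity as the paper, it organizes the computation differently and treats the second sentence much more completely. The paper stays entirely in real terms: it observes that, because $\Omega=\omega_1+i\omega_2$ is of type $(2,0)$, one has $\omega_1\rfloor I\xi=-\omega_2\rfloor\xi$ and $\omega_2\rfloor I\xi=\omega_1\rfloor\xi$, asserts that $\Omega\rfloor(\xi-iI(\xi))=2dH$ ``follows easily'' from these, and dispatches the final statement with the single line ``just use that $\Omega$ and $H$ are holomorphic.'' You instead work in $TX\otimes\bbC$, reduce the claim by pure type bookkeeping to the single identity $\Omega\rfloor\xi^{1,0}=\partial H$, and then deduce it from the one real input $\omega_1\rfloor\xi=dF$ via the injectivity of $\Re$ on $(1,0)$-forms. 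That little lemma is exactly the information encoded in the paper's two displayed contraction identities, so the computations are equivalent; but your packaging makes transparent where the factor $2$ and the holomorphicity of $H$ (i.e.\ $\dbar H=0$) actually enter, and it isolates precisely which real equation is used. More substantively, you supply what the paper leaves implicit: holomorphicity of $\xi-iI(\xi)$ as a section of $T^{1,0}X$ via the holomorphic isomorphism $T^{1,0}X\cong(T^{1,0}X)^*$ induced by the nondegenerate holomorphic form $\Omega$; the standard equivalence of this with $\mathcal{L}_\xi I=0$, hence biholomorphicity of $\Phi_t$; and $\Phi_t^*\Omega=\Omega$ from $\Phi_t^*\omega_1=\omega_1$ together with the recovery $\Omega=2(\omega_1)^{2,0}$, with the deductions correctly ordered (the type decomposition is $\Phi_t$-equivariant only after biholomorphicity is known). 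One available shortcut for that last step: once $\Omega\rfloor\xi=dH$ is in hand, Cartan's formula gives $\mathcal{L}_\xi\Omega=d(\Omega\rfloor\xi)=d(dH)=0$ directly, so preservation of $\Omega$ needs neither $\omega_1$ nor the $(2,0)$-part recovery.
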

\begin{proof}
We first note that, since $\Omega$ is of type $(2,0)$, 
\begin{equation}\label{hamfh}
\omega_1\rfloor I\xi = -\omega_2\rfloor\xi\quad\text{and}\quad 
\omega_2\rfloor I\xi = \omega_1\rfloor\xi .
\end{equation}
From this it follows easily that
$\Omega\rfloor (\xi - iI(\xi)) = 2dH$.  For the final statement just use that $\Omega$ and $H$ are holomorphic.
\end{proof}

For future reference we note the relations
\begin{equation}\label{hamfi}
\omega_1\rfloor\xi =d\Re H\quad\text{and}\quad \omega_1\rfloor I(\xi) = - d\Im H
\end{equation}
that follow from (\ref{hamfh}).

\begin{lemma}\label{realcase}
Suppose that $h:=\iota^*H$ is real.  Then $\xi$ is tangent to $M$, and its restriction to $M$ is
the Hamilton field of $h$ with respect to $\omega$.
\end{lemma}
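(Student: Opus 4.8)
The plan is to prove the two assertions in turn: first that $\xi$ is tangent to $M$ along $M$, and then that its restriction satisfies the Hamilton equation for $h$ with respect to $\omega$. The hypothesis that $h=\iota^*H$ is real means precisely that $\iota^*\Im H=\Im h=0$, i.e. the imaginary part $G=\Im H$ vanishes identically on $M$; in particular $dG$ annihilates $T_pM$ for every $p\in M$. The only other ingredients I would use are the two relations (\ref{hamfi}), namely $\omega_1\rfloor\xi=d\Re H$ and $\omega_1\rfloor I(\xi)=-d\Im H=-dG$, together with the facts recorded in the setup that $M$ is $\omega_1$-symplectic and $\omega_2$-lagrangian.

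For the tangency, fix $p\in M$ and use the pointwise real splitting $T_pX=T_pM\oplus I(T_pM)$. First I would check that this splitting is $\omega_1$-orthogonal: for $w,v\in T_pM$ the type-$(2,0)$ identity $\omega_1(Iw,v)=-\omega_2(w,v)$ (the unrestricted form of (\ref{hamfh})) gives $\omega_1(Iw,v)=0$ because $M$ is $\omega_2$-lagrangian, so $I(T_pM)\subseteq(T_pM)^{\perp_{\omega_1}}$, and a dimension count ($\dim T_pM=\dim I(T_pM)=2n$) forces equality. Now write $\xi_p=a+Ib$ with $a,b\in T_pM$ and test against $v\in T_pM$: from (\ref{hamfi}) and $dG|_{T_pM}=0$ we get $\omega_1(I\xi_p,v)=-dG(v)=0$, while $I\xi_p=Ia-b$, and the $\omega_1$-orthogonality kills the $Ia$ term, leaving $\omega_1(b,v)=0$ for all $v\in T_pM$. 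Since $\omega_1$ restricts to a nondegenerate form on $T_pM$ (that is, $M$ is $\omega_1$-symplectic), this forces $b=0$, so $\xi_p=a\in T_pM$ and $\xi$ is tangent to $M$.

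Granting tangency, identifying the restriction is a short computation. For $v\in T_pM$ I would pull back the first relation in (\ref{hamfi}): since $\iota^*\omega_1=\omega$ and both $\xi|_M$ and $v$ are tangent to $M$, $\omega(\xi|_M,v)=\omega_1(\xi,v)=d\Re H(v)=d(\iota^*\Re H)(v)=dh(v)$, using $\Re h=h$. Thus $\omega(\xi|_M,\cdot)=dh$, which is exactly the statement that $\xi|_M$ is the Hamilton field $\Xi^\omega_h$.

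I expect the tangency step to be the only real obstacle, and within it the crux is the $\omega_1$-orthogonality of the splitting $T_pX=T_pM\oplus I(T_pM)$, which is what converts the vanishing of $dG$ on $T_pM$ into the vanishing of the normal component of $\xi$; once the two model facts ($M$ symplectic for $\omega_1$, lagrangian for $\omega_2$) are in hand this is elementary linear algebra. An alternative, perhaps more conceptual, route would exploit the involution $\tau$: one shows $H\circ\tau=\overline{H}$ by analytic continuation from the maximal totally real submanifold $M$, deduces $\Re H\circ\tau=\Re H$ and $\tau^*\omega_1=\omega_1$, hence $\tau_*\xi=\xi$, and concludes tangency from the fact that the $+1$-eigenspace of $d\tau_p$ is exactly $T_pM$; I would keep this in reserve but prefer the direct linear-algebra argument above.
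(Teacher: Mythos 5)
Your proof is correct, but it is genuinely different from the paper's: the paper's actual proof is precisely the symmetry argument you sketch at the end and keep ``in reserve.'' There, reality of $h$ gives $\tau^*H=\overline{H}$ by uniqueness of analytic continuation, so $\Re H$ is $\tau$-invariant; together with $\tau^*\Omega=\overline{\Omega}$ (hence $\tau^*\omega_1=\omega_1$) this makes the Hamilton field $\xi$ of $\Re H$ with respect to $\omega_1$ invariant under $\tau$, and therefore tangent to the fixed-point set $M$; the identification $\xi|_M=\Xi^\omega_h$ then follows, exactly as in your last step, from $\omega=\iota^*\omega_1$. Your primary argument replaces this global equivariance by pointwise linear algebra: the unrestricted form of (\ref{hamfh}) (valid since $\Omega$ is of type $(2,0)$) shows $I(T_pM)$ is $\omega_1$-orthogonal to $T_pM$ because $M$ is $\omega_2$-lagrangian, and nondegeneracy of $\omega_1|_{T_pM}$ converts $dG|_{T_pM}=0$ into vanishing of the normal component $b$ of $\xi_p=a+Ib$; this is sound, with the one presentational caveat that the splitting $T_pX=T_pM\oplus I(T_pM)$, which you assert at the outset, should be flagged as itself a consequence of what you prove (orthogonality plus nondegeneracy of $\omega_1$ on $T_pM$ give $T_pM\cap I(T_pM)=\{0\}$, and the dimensions add up to $4n$), or alternatively of $T_pM$ being the fixed space of the antilinear involution $d\tau_p$. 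As for what each route buys: yours uses only that $M$ is $\omega_1$-symplectic and $\omega_2$-lagrangian and that $\Im H$ vanishes on $M$ --- the involution $\tau$ and analytic continuation play no role --- so it applies verbatim to any such submanifold and even localizes (tangency at $p$ needs only $dG$ annihilating $T_pM$ at $p$); the paper's route is shorter and more conceptual, exploiting structure already built into the complexification, at the cost of invoking the global involution and uniqueness of analytic continuation.
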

\begin{proof}
If $h$ is real then $\tau^*H = \overline H$ (by uniqueness of analytic continuation of $h$), so $\Re H$ is 
$\tau$-invariant.  Since $\tau^*\Omega = \overline{\Omega}$, 
$\omega_1$ is also $\tau$-invariant, and therefore
$\tau$ maps $\xi$ to itself and so $\xi$ has to be tangent to the fixed-point set of $\tau$.  For the second part just note that $\omega = \iota^*\omega_1$.
\end{proof}

\medskip\noindent
\underline{Proof of Theorem \ref{main}}.  We take one point at a time:

\smallskip
\noindent (1)  Since the fibers of $\Pi_t$ are the leaves of a holomorphic foliation, there is a well-defined
complex structure in the abstract normal bundle to the fibers.  The inclusion $\iota: M\hookrightarrow X$ realizes $M$
as a cross-section to the foliation and identifies $TM$ with the normal bundle to the foliation along $M$.
Therefore, it inherits a complex structure that makes $\Pi_t$ holomorphic.

\smallskip
\noindent(2)  We will actually show (\ref{actuallyShow}).
This follows from the interpretation of the structures $J_t$ as arising from the normal
bundle structures together with the fact that $\Phi_t$ is holomorphic, or can be checked directly as follows.
Let $v\in T_xM$, then $I(v) = J_0(v) + w$, where $w\in T_x\calF_0^x$.  Since $d\Phi_t$ is holomorphic, one has
\[
Id\Phi_t(v) = d\Phi_t( J_0 (v)) + d\Phi_t(w).
\]
But $d\Phi_t(w)\in T\calF_t$ since $\Phi_t$ maps fibers of $\calF_0$ to fibers of $\calF_t$.  Therefore,
the previous relation implies that 
\[
d(\Pi_t)( d\Phi_t( J_0 (v))) = d(\Pi_t)(Id\Phi_t(v)) = J_t d(\Pi_t)(d\Phi_t(v)),
\]
which precisely says that $\phi_t$ is holomorphic.

\smallskip
\noindent(3) Omitting the subscript $t$ for simplicity,  by Lemma \ref{Cool} we need to show that
\[
d\Pi_x(\xi) = \Xi_{\Re h} + \nabla\Im h
\]
where:
\begin{enumerate}
\item $ \Xi_{\Re h}$ is the Hamilton field of the real part of $h$ with respect to $\omega$, and
\item $\nabla\Im h$ is the gradient of the imaginary part of $h$ with respect to the metric $(\omega, J)$.
\end{enumerate}
By the previous lemma, if $h$ is real, $d\Pi(\xi_x)= \xi_x$ and there is nothing more to prove. 
Suppose now that $h$ is purely imaginary.  By the second relation in (\ref{hamfi}), $I(\xi)$
is the Hamilton field of $-G$  (see \ref{re+iim}), and by the lemma \ref{realcase} $I(\xi)$ is tangent to $X$ and its restriction to $X$ is the Hamilton field of $ih = \iota^*(-G+iF)$ with respect to $\omega$.  Therefore, in this case, we can write
\begin{equation}\label{}
-\Xi_{\Im H} = d\Pi (I(\xi)) = J\, d\Pi(\xi),
\end{equation}
and it suffices to apply $J$ to both sides to get the result.  The general case follows by $\bbR$-linearity of
the composition $h\mapsto H\mapsto \xi$, where the first arrow is analytic continuation.
This concludes the proof of Theorem \ref{Main}.

\bigskip
\noindent\underline{Proof of Corollary \ref{CorMain}}.
Recalling that $f_t = (\phi_{-t})^{-1}$, so we wish to compute
$\dot f_t \circ \phi_{-t}$.
Differentiating with respect to time the identity $f_t\circ \phi_{-t}(x) = x$, we get
\[
\dot{f}_t \circ \phi_{-t}(x) = d(f_t)(\dot{\phi}_{-t}(x)) = 
d(f_t) \left[\Xi^\omega_{\Re h} + J_{-t}(\Xi^\omega_{\Im h})\right]_{\phi_{-t}(x)}
\]
Now it is not hard to check that 
$d(f_t)\left[\Xi^\omega_{\Re h}\right] = \Xi^{\omega_t}_{\Re h\circ f_t^{-1}}$ and
that $df_t\circ J_{-t} = J_0\circ df_t$ (using that $\phi_{-t}: (M,J_0)\to (M, J_{-t})$
is holomorphic).  Therefore,
\[
\dot{f}_t \circ \phi_{-t}(x) = \left[\Xi^{\omega_t}_{\Re h\circ f_t^{-1}} + 
J_0(\Xi^{\omega_t}_{\Im h\circ f_t^{-1}})\right]_{f^{-1}_{t}(x)}.
\]

\section{Further remarks and examples}

\subsection{The equation for $\dot{J}_t$}


Let $\phi_t: M\to M$ be the exponential of 
$h= F +iG: M\to\bbC$, i.e.
$
\dot{\phi_t}\circ \phi_t^{-1} = \Xi_F + J_t\Xi_G = \Theta_t.
$
We have shown that, for each $t$,
\begin{equation}\label{it'sHolo}
\phi_t: (M, J) \to (M, J_t)
\end{equation}
is holomorpic, and, therefore,
the complex structures $J_t$ satisfy the equation
\begin{equation}\label{jtEqn}
\dot J_t = -\calL_{\Theta_t}J_t.
\end{equation}
We want to make this equation explicit in local 
coordinates $(x_1,\ldots , x_{2n}) $.  
Let the components of $\Theta_t$ be
\begin{equation}\label{}
\Theta_t = \langle \Theta_t^1,\ldots ,\Theta_t^{2n}\rangle = \Xi_F +J_t\Xi_G.
\end{equation}
Then, if $\phi_t = ( \phi_t^1,\cdots , \phi_t^{2n})$ is the flow in coordinates, the
equations of motion are
\begin{equation}\label{}
\frac{d\ }{dt}{\phi_t}^j(x) = \Theta_t^j(\phi_t(x)).
\end{equation}
 All tangent
spaces are identified with $\bbR^{2n}$.  Fix an initial condition $x_0\in M$, and let
$\bbJ_t(x)$ be the {\em matrix} of the complex structure at time $t$ and at $T_xM$.  Let
\[
K_t = \text{the matrix of }\  d(\phi_t)_{x_0}: T_{x_0}M\to T_{\phi_t(x_0)}M.
\]
Then that (\ref{it'sHolo}) is holomorphic becomes 
\begin{equation}\label{}
K_t\,\bbJ_0 (x_0)= \bbJ_t(\phi_t(x_0))\,K_t
\end{equation}
From this, it follows that
\begin{equation}\label{}
\frac{d\ }{dt} \bbJ_t(\phi_t(x_0)) = [\dot K_t K_t^{-1}\,,\,\bbJ_t(\phi_t(x_0))]
\end{equation}
(matrix commutator).  To continue, note that
\begin{equation}\label{}
K_t = \begin{pmatrix}
\frac{\partial \phi^i_t}{\partial x_j}
\end{pmatrix}_{|_{x=x_0}}
\end{equation}
where $i$ is the row index and $j$ the column index.
To compute $\dot K_t$ we use the equations of motion and the fact that the partials
with respect to $x$ and with respect to $t$ commute:
\[
\frac{\partial^2 \phi_t^i}{\partial t\partial x_j} = \frac{\partial\ }{\partial x_j}\Theta^i_t(\phi_t(x)) = 
\sum_k \frac{\partial \Theta_t^i}{\partial x_k}(\phi_t(x))\, \frac{\partial \phi_t^k}{\partial x_j}(x),
\]
which says that
\begin{equation}\label{}
\dot K_t = \Theta_t' (\phi_t(x_0))\, K_t,\quad\text{where}\quad \Theta'_t = \begin{pmatrix}
\frac{\partial \Theta_t^i}{\partial x_j}
\end{pmatrix}
\end{equation}
is the Jacobian matrix of $\Theta_t:\bbR^{2n}\to\bbR^{2n}$ at $\phi_t(x_0)$.
Thus we obtain that $\dot K_t K_t^{-1} = \Theta'_t$, which then leads to:
\begin{equation}\label{aMedias}
\frac{d\ }{dt} \bbJ_t(\phi_t(x_0))  = [\Theta'_t(\phi_t(x_0))\,,\, \bbJ_t(\phi_t(x_0))].
\end{equation}
However
\[
\frac{d\ }{dt} \bbJ_t(\phi_t(x_0)) = \dot\bbJ_t(\phi_t(x_0)) + 
\sum_{k=1}^{2n}\Theta^k_t(\phi_t(x_0))\,\frac{\partial\bbJ_t}{x_k}(\phi_t(x_0)).
\]
We obtain:
\begin{lemma} At each $x\in M$ in the coordinate patch
\begin{equation}\label{}
\dot\bbJ_t(x) = [\Theta'_t(x)\,,\, \bbJ_t(x)] - 
\sum_{k=1}^{2n}\Theta^k_t(x)\,\frac{\partial\bbJ_t}{x_k}(x).
\end{equation}
\end{lemma}

\bigskip
So far we have not used the expression (\ref{Xi}) for $\Theta$.  
In  symplectic coordinates,
$(x_1, \ldots, x_n)= (p_1,\ldots,p_n,q_1,\ldots , q_n)$
\begin{equation}\label{}
\Theta_t = \Omega\,\nabla F + \bbJ_t\,\Omega\,\nabla G,
\end{equation}
where $
\Omega = 
\begin{pmatrix}
0 & -I\\
I & 0
\end{pmatrix}$ and we are using column vector notation.
Then 
\begin{equation}\label{}
\Theta'_t = \Omega\, F'' + \bbJ_t\,\Omega\,G'' + 
\begin{pmatrix}
\sum_{k=1}^{2n} \frac{\partial \Upsilon_{ik}}{\partial x_j} \gamma_k
\end{pmatrix}
\end{equation}
where we have introduced the notations
\[
F'' = \begin{pmatrix}
F_{pp} & F_{pq}\\ F_{qp} & F_{qq}
\end{pmatrix}
\quad\text{with}\quad
F_{pq} = \begin{pmatrix}
\frac{\partial^2 F}{\partial p_j\partial q_i}
\end{pmatrix},
\]  
where $i$ is the row index, etc., and
\[
 \quad \bbJ_t(x) = \begin{pmatrix} \Upsilon_{ij}(t,x)\end{pmatrix}, 
\quad (\gamma_1, \ldots ,\gamma_{2n}) = (-G_{q_1} , \cdots , -G_{q_n} ,,G_{p_1} , \cdots , G_{p_n}).
\]  
We have proved:
\begin{proposition}  Let $\Gamma $ be the matrix
\begin{equation}\label{}
\Gamma = \begin{pmatrix}
\sum_{k=1}^{2n} \frac{\partial \Upsilon_{ik}}{\partial x_j} \gamma_k
\end{pmatrix}.
\end{equation}
Under the evolution of the complex-valued Hamiltonian $h= F + iG: M\to\bbC$,
the complex structure evolves according to the equation
\begin{equation}\label{jDotEqCoords}
\dot\bbJ_t = [ \Omega F'' + \bbJ_t\Omega G''+\Gamma\,,\, \bbJ_t] 
-\sum_{k=1}^{2n}\Theta^k_t(x)\,\frac{\partial\bbJ_t}{x_k}(x).
\end{equation}
\end{proposition}

\medskip
In case $M=\bbR^{2n}$ and $F,\, G$ are quadratic forms,
$\bbJ_t$ is independent of the $x$ variables, and the 
equation (\ref{jDotEqCoords}) simplifies to
\begin{equation}\label{}
\dot\bbJ_t = [ \Omega F'' + \bbJ_t\Omega G''\,,\, \bbJ_t],
\end{equation}
which is in agreement with equation (48) in [Graefe-Schubert 2012]
(though the latter is written in terms of the metric
$\Omega\bbJ_t$).

\subsection{The case $M=\bbC^n$}

We consider $\bbR^{2n}$ with the standard symplectic structure and complex coordinates
$\zeta_j = \frac{1}{\sqrt{2}}(q_j+ip_j)$, so the symplectic structure is 
\begin{equation}\label{}
\omega = \sum_{j=1}^n dp_j\wedge dq_j = -i\sum_{j=1}^n d\zeta_j\wedge d\bar\zeta_j.
\end{equation}
Given $h:\bbR^{2n}\to \bbR$, its Hamilton field $\Xi_h$ is defined by the condition
$
\omega(\cdot, \Xi_h) = dh.
$
This gives the usual equations of motion
$\dot{q_j} = h_{p_j},\  \dot{p_j} = -h_{q_j}.$
One can check that in  complex coordinates Hamilton's equations are
\begin{equation}\label{haminCpx}
\dot\zeta_j = -i\frac{\partial h}{\partial \bar\zeta_j}
\end{equation}
and its complex conjugate (which is redundant). 

\bigskip
We complexify $\bbC^n$ by the anti-diagonal embedding
$\iota: \bbC^n\hookrightarrow \bbC^n\times\bbC^n$, 
$\zeta\mapsto (\zeta,\bar\zeta)$.  The initial projection $\Pi:\bbC^n\times\bbC\to\bbC^n$ is
just projection onto first factor.  We denote by $(z,w)$ complex variables on $\bbC^{2n}$.
The symplectic form $\omega$ extends analytically to the complex symplectic form
\[
\Omega = -i \sum_{j=1}^n dz_j\wedge dw_j.
\]
If $H:\bbC^{2n}\to\bbC$ is holomorphic, the associated Hamiltonian equations are:
\begin{equation}\label{cpxh}
\dot{z_j} = -iH_{w_j},\quad \dot{w_j} = iH_{z_j}.
\end{equation}

\bigskip
We now explain how to implement our scheme for constructing the exponential of 
$C^\omega(\bbR^{2n}, \bbC)$, where $\bbR^{2n}\cong\bbC^n$ as above.
Let $h:\bbR^{2n}\to\bbC$ be such that there is a holomorphic $H:\bbC^n\times\bbC^n\to\bbC$ such that 
\[
h = H\circ\iota.
\]   
Assume that Hamilton's
equations for $H$ can be integrated (take $t$ to be real, for simplicity), to yield a flow
\[
\Phi_t :\bbC^{2n}\to\bbC^{2n}.
\]
For each $\zeta\in\bbC^n$, let
\[
\calF^\zeta = \{ (\zeta, w)\;|\; w\in\bbC^n\}
\]
be the fiber over $\zeta$ of the projection $\Pi$.  Then $\phi_t(\zeta)\in\bbC^n$ is defined by the condition
\begin{equation}\label{theCond}
\{(\phi_t(\zeta),\overline{\phi_t(\zeta)})\}\in \Phi_t(\calF^\zeta).
\end{equation}

\bigskip
Now we proceed to examples, where we take $n=1$.


\medskip\noindent
\subsubsection{The imaginary harmonic oscillator}
This is  $h= \frac{i}{2}(q^2+p^2) = i\zeta\bar\zeta$, so that
$H = izw$.  Then
the equations are $\dot{z} = z$, $\dot{w} = -w$, so
\[
\Phi_t(z,w) = (e^{t}z, e^{-t} w).
\]
We must implement (\ref{theCond}) to find $\phi_t$.
In this case, this is trivial because Hamilton's equations separate:
\[
\{ (e^t\zeta, e^{-t}w)\;|\; w\in\bbC\} \cap \text{real locus} = \{(e^{t}\zeta, e^{t}\overline\zeta)\}.
\]
Therefore, in this case $\phi_t(\zeta) = e^{t}\zeta$, which is the gradient flow of $\Im(h)$, in agreement
with remark \ref{CompactGroup}.

\subsubsection{A quadratic, non-Hermitian example}\label{QuadNonHerm}
Let us take next an example that Graefe and Schubert also discuss in \cite{GrSc},
namely
\[
h = \frac{i}{2}q^2 = \frac{i}{4}(\zeta+\bar\zeta)^2.
\]
The analytic continuation of this Hamiltonian is just
\[
H =  \frac{i}{4}(z+w)^2, 
\]
and the equations of motion in the complexification are
\[
\dot{z} = -iH_w = \frac{1}{2}(z+w),\qquad \dot{w} = iH_z = -\frac{1}{2}(z+w).
\]
It is clear that $z+w$ is constant in time, and therefore the flow in the complexification is
\begin{equation}
\Psi_t(z,w) = \left(\frac{t}{2}(z+w) + z,  -\frac{t}{2}(z+w)+w\right).
\end{equation}
To find the induced map $\phi_t:\bbR^2\to\bbR^2$, we are to proceed as follows.  Fix $\zeta\in\bbC$, and 
flow-out under $\Phi_t$ the complex line
\begin{equation}
\calF^{\zeta} = \{(\zeta, w)\;;\; w\in\bbC\}.
\end{equation}
The result is
\begin{equation}
\calF^{\zeta}_t = \left\{\left(\frac{t}{2}(\zeta+w) + \zeta,  -\frac{t}{2}(\zeta+w)+w\right)\;;\; w\in\bbC\right\}.
\end{equation}
The points of intersection of this complex line with the real locus are given by the solutions
to the equation in $w$
\begin{equation}
 -\frac{t}{2}(\bar{\zeta}+\bar{w})+\bar{w} = \frac{t}{2}(\zeta+w) + \zeta ,
\end{equation}
or equivalently
\begin{equation}
\bar{w} - \zeta = t\left(\Re(\zeta)+\Re(w) \right).
\end{equation}
Let us write  $\zeta = a+ib$, $w=\alpha+i\beta$. The equation becomes the system
\begin{equation}
\alpha-a = t(\alpha+a), \quad
\beta =b.
\end{equation}
Assuming $t\not=1$ the solution is, $w=\frac{1+t}{1-t}a +ib$, which after some calculations yields
\begin{equation}
\phi_t(a+ib) = \frac{a}{1-t} + ib.
\end{equation}
As $t\to 1$ from the right, the image point tends to infinity.

\subsubsection{The generic element of the ``maximal torus"}
Let us now take $h = \varphi(|\zeta|^2)$ with $\varphi:\bbR\to\bbC$ having an analytic extension
$F: \bbC\to\bbC$.  Then
\[
H(z,w) = F(zw),
\]
and Hamilton's equations are 
\[
\dot{z} = -izF'(zw),\quad \dot{w} = iwF'(zw).
\]
Clearly, the function $zw$ is a constant of motion, and we can integrate:
\[
z(t) = e^{-itF'(\zeta w_0)}\zeta,\quad w(t) = e^{itF'(\zeta w_0)}w_0.
\]
Therefore  $\phi_t(\zeta) = e^{-itF'(\zeta w)}\zeta$ where $w$ solves
\begin{equation}\label{mustSolve}
e^{itF'(\zeta w)}w = \overline{\left(e^{-itF'(\zeta w)}\zeta\right)}.
\end{equation}
This is a transcendental equation.   However, we can easily prove:

\begin{lemma}  Fix $H$ as above.  Then $\phi_t$ commutes with the (usual) harmonic oscillator.
\end{lemma}
\begin{proof}
Fix $\zeta\in\bbC$, $w$ solving (\ref{mustSolve}), and $s\in\bbR$.  Let $\zeta_s = e^{-is}\zeta$
and $w_s = e^{is}w$.  Then
\[
e^{itF'(\zeta_s w_s)}w_s = e^{is}\,e^{itF'(\zeta w)} w = 
e^{is}\,\overline{\left(e^{-itF'(\zeta w)}\zeta\right)} =
\]
\[
=\overline{\left(e^{-itF'(\zeta w)}e^{-is} \zeta\right)} = 
\overline{\left(e^{-itF'(\zeta_s w_s)}\zeta_s\right)},
\]
which shows that $\phi_t(\zeta_s) = e^{-itF'(\zeta w)}\zeta_s = e^{-is}\phi_t(\zeta)$.

\end{proof}
\subsection{Coadjoint orbits}\label{CoAds}
We start by detailing the remark after Proposition \ref{conditions} above for the case of $M$ a coadjoint orbit of a compact Lie group $G_0$. For clarity, we restrict to the case of $G_0$ semi-simple. Let $\mathfrak{g}^*_0$ denote the dual of the Lie algebra $\frakg_0$ of $G_0$, and let $\lambda \in \frakg_0^*$. We identify $\frakg^*_0$ with $\frakg_0$ via the Killing form $B$. Thus there is a unique vector $\xi_{\lambda} \in \frakg_0$ such that $<\lambda, \eta> = B(\xi_{\lambda}, \eta)$, for all $\eta \in \frakg_0$. The Kostant-Kirilov $G_0$-invariant symplectic form on the orbit $G_0 \cdot \lambda := \mathcal{O}_{\lambda}$ is defined at $\lambda \in \mathcal{O}_{\lambda}$ via 
\[ \omega_{KK}(\tilde{\xi}, \tilde{\eta}) := \lambda([\xi, \eta]) = B(\xi_{\lambda}, [\xi, \eta]),\]
for all $\xi, \eta \in \frakg_0$, where $\tilde{\xi}$ is the vector field on $\mathcal{O}_{\lambda}$ induced by $\xi \in \frakg_0,$ etc. We will drop the subscript on the symplectic form. Let $H_0 \subset G_0$ be the isotropy group of $\lambda$, i.e., the centralizer of $\xi_{\lambda}$ and let $G$, resp. $H$ be the complexifications of $G_0$, resp., $H_0$. Note that any maximal torus $T_0$ in $H_0$ has $\xi_{\lambda}$ in its Lie algebra $\frakt_0$, and that $T_0$ is a maximal torus in $G_0$. Let $\Delta_{\lambda,+}$ be a set of positive roots for $\frakg$ for which $-i \alpha(\xi_{\lambda}) \geq 0,$ for all $\alpha \in \Delta_{\lambda,+}$ (i.e., for which $\lambda$ is in the closure of the corresponding Weyl chamber). Note that 
\[ \frakh = \frakh_0 \otimes \bbC = \frakt \; \oplus \; \oplus_{\{\alpha \, | \, <\alpha, \xi_{\lambda}> = 0\}} \frakg_{\alpha}.\]
Define the nilpotent algebra $${\frakn_{\lambda}}_{+} = \oplus_{\{\alpha \, | \, -i<\alpha, \xi_{\lambda}> > 0\}} \frakg_{\alpha},$$ and $N_{\lambda}$ the corresponding unipotent group in $G$. Let $P$ denote the parabolic subgroup of $G$ given by $H \cdot N_+$. The coadjoint orbit $\mathcal{O}_{\lambda} = G_0/H_0 = G/P$ has an induced complex structure from the second presentation. Two such structures coming from different sets of positive roots for $\frakg$ are equivalent, using the element of the Weyl group which relates these two sets of positive roots and fixes $\lambda$. There is an invariant Kaehler metric on $\mathcal{O}_{\lambda}$ such that the corresponding Kaehler form is the Kostant-Kirilov form.

\medskip

Now the complex orbit $\mathcal{O}_{\lambda, \bbC} = G \cdot \lambda \subset \frakg^*$ is a holomorphic symplectic manifold with symplectic form given by the Kostant-Kirilov prescription above, which is the complexification of the real $\mathcal{O}_{\lambda}, \omega$ above. The conjugation $\tau$ of $\frakg^*$ fixing $\frakg_0^*$ fixes $\mathcal{O}_{\lambda}$ and $\tau^* \omega_{\bbC} = \bar{\omega}_{\bbC}$. To complete the verification of the criteria (1) and (2) of Proposition \ref{conditions} above for $\mathcal{O}_{\lambda, \bbC}$, note that since $H \subset P$, the foliation of $G$ by left $P$-cosets is preserved under right multiplication by elements of $H$, and hence descends to give a foliation $\mathcal{F}$ of $\mathcal{O}_{\lambda, \bbC}$ which is invariant under the action of $G$. The leaves of this foliation are biholomorphic to $P/H \isom N_+ \isom \bbC^d$, where ``$\isom$" denotes isomorphism as algebraic varieties. That the leaves of this foliation are Lagrangian for $\omega_{KK, \bbC}$ amounts to the fact that $B(\xi_{\lambda}, [\xi, \eta]) = 0,$ which is because $ad(\xi_{\lambda}) \circ ad(\zeta)$ is nilpotent on $\frakg$ for any $\zeta \in \frakn_+$; for example, $\zeta = [\xi, \eta],$ for $\xi, \eta \in \frakn_+$. Finally, we have a canonical holomorphic $G$-equivariant mapping 
\[
\Pi: \mathcal{O}_{\lambda, \bbC} \to \mathcal{O}_{\lambda},\]
given by 
\[\mathcal{O}_{\lambda, \bbC} \ni gH \to gP \in G/P = \mathcal{O}_{\lambda},\]
whose fibers are obviously the leaves of the foliation $\mathcal{F}$.

\subsection{The case $M=\bbP^1$.}  We now specialize the discussion of coadjoint
orbits to the case of SU$(2)$.  
\subsubsection{Generalities}
We take the (skew-Hermitian) Pauli matrices as a
basis of the Lie algebra, 
\begin{equation}\label{sph4}
\sigma_1 = \frac{1}{2}
\begin{pmatrix}
0 & i \\
i & 0
\end{pmatrix}, \ 
\sigma_2 = \frac{1}{2}
\begin{pmatrix}
0 & -1 \\
1 & 0
\end{pmatrix}, \ 
\sigma_3 = \frac{1}{2}
\begin{pmatrix}
i & 0 \\
0 & -i
\end{pmatrix}
\end{equation}
 (so that $[\sigma_1,\sigma_2] = \sigma_3$, etc).   
  Give su$(2)$ the invariant inner product
such that the $\sigma_j$ are orthogonal and have length $1$, namely
\begin{equation}\label{ip}
\forall \alpha,\ \beta\in \text{su}(2)\qquad (\alpha, \beta) = -2\tr(\alpha\beta) = 2\tr(\alpha\beta^*),
\end{equation}
where $\beta^* = \overline{\beta}^T$, and use it to identify adjoint and coadjoint orbits.
 We will consider
 \[
 \calO = \text{adjoint orbit of }\sigma_3 \cong \bbC\bbP^1.
 \]
We let $x_k$ denote the $k$-th coordinate:
\[
x_k(\alpha) = (\alpha, \sigma_k).
\]
Explicitly,  if
\begin{equation}\label{realElt}
\alpha = \frac{i}{2}\begin{pmatrix}
a & z \\
\overline{z} & -a
\end{pmatrix} \in \text{su}(2)
\end{equation}
with $a\in\bbR,\ z\in\bbC$, then
\[
x_1(\alpha) =  \Re{z}, \quad
x_2(\alpha) = \Im{z}, \quad
x_3(\alpha) = a
\]
and $\calO$ is the unit sphere, $\sum x_j^2=1$.

\medskip
To complexify $\calO$ we introduce SL$(2,\bbC)$ and its Lie algebra.
We take the basis (\ref{sph4}) as a basis over $\bbC$ of sl$(2,\bbC)$.  
The quadratic form $-2\tr(\alpha\beta)$ is $G=$ SL$(2,\bbC)$ invariant and non-degenerate, so that we can 
continue to use it to identify adjoint and coadjoint orbits.

The SL$(2,\bbC)$ orbit through $\sigma_3$, $\calO_\bbC$, 
is the complexification of the previous orbit.
Let us denote a general element of sl$(2,\bbC)$ by
\begin{equation}\label{complexElt}
m = \frac{i}{2}\begin{pmatrix}
a & b\\
c & -a
\end{pmatrix},\quad a,\,b,\,c\in\bbC .
\end{equation}
Comparing with (\ref{realElt}), we see that the real locus is
$a \in\bbR$ and $ c = \bar{b}$.
The coordinate functions $x_j$ extend holomorphically to the functions
\begin{equation}\label{coordExtension}
X_1(m) = \frac{b+c}{2},\quad X_2(m) = \frac{b-c}{2i},\quad X_3(m) = a.
\end{equation}
The equation of the complex orbit, $\calO_\bbC$, is
\[
1 = \sum_j X_j^2 = bv + a^2 = 4\det(m),
\]
or $\det(m) = \frac 14$, which corresponds to $m$ having the eigenvalues $\pm \frac i2$.

\medskip
The unipotent group in SL$(2,\bbC)$ corresponding to $\sigma_3$ is
\begin{equation}\label{}
N_{+} = \left\{\begin{pmatrix} 1& n\\ 0 & 1\end{pmatrix}\;;\; n\in\bbC\right\},
\end{equation}
and the parabolic subgroup $P = TN_{+}$ is
\begin{equation}\label{}
P = \left\{\begin{pmatrix} t& n\\ 0 & t^{-1}\end{pmatrix}\;;\; n\in\bbC,\ t\in\bbC^*\right\}.
\end{equation}
If we now let
\begin{equation}\label{}
A = \left\{\begin{pmatrix} a& 0\\ 0 & a^{-1}\end{pmatrix}\;;\; a>0\right\},
\end{equation}
then we have  $T = T_0A$
(polar decomposition of complex numbers)
as well as the Iwasawa decomposition ($G=$SL$(2,\bbC)$, $G_0 = $SU$(2)$)
\begin{equation}\label{}
G = G_0AN
\end{equation}
and, therefore, $\displaystyle{G/P \cong G_0/T_0 }$.
More specifically, we have the commuting diagram of diffeomorphisms
\begin{equation}\label{}
\begin{array}{ccc}
G/P & \to & G_0/T_0\\
  & \searrow & \downarrow\\
  & & \calO
\end{array}
\end{equation}
where the top arrow is 
$gP\mapsto kT_0$ ($g=kan$ the Iwasawa decomposition of $g$),
the vertical arrow is $kT_0\mapsto k\cdot \sigma_3$

According to the general discussion of orbits, the projection $\Pi: \calO_{\bbC}\to\calO$
corresponds to the projection $G/T\to G/P$ given by $gT\mapsto gP$.  This shows that
\[
\Pi^{-1}(\sigma_3) = \{ gT\;;\; g\in AN\}.
\]
It is easy to see that this is the orbit of $N$ through $\sigma_3$, or, equivalently:
\begin{equation}\label{leafNorthPole}
\Pi^{-1}(\sigma_3) = \left\{ \frac 12 \begin{pmatrix} i & n \\ 0 & -i \end{pmatrix}\;;\; n\in\bbC
\right\}.
\end{equation}
This line is one of the two lines which are the intersection of the plane $X_3=1$ with the quadric.
By equivariance, we can conclude:

\begin{lemma}
For $\lambda\in\calO$, let $\ell_\lambda: \calO_\bbC \to\bbC$ be the function 
$\ell_\lambda(m) = \inner{m}{\lambda}$.  Then the fiber $\Pi^{-1}(\lambda)$ is 
one of the two lines whose union is $\ell_\lambda^{-1}(1)$.
\end{lemma}

\subsubsection{An alternate model}

There is an alternate model for the pair $(\calO,\, \calO_\bbC)$ in which the
fibers of the projection $\Pi: \calO_\bbC\to\calO$ are easier to describe.
Let us define
\begin{equation}\label{}
M := \{ (\ell\,, \ell^\bot)\;;\; \ell\subset\bbC^2\ \text{ 1-dimensional subspace}\, \}\cong\bbP^1
\end{equation}
where $\bbP^1$ is the complex projective line, and 
\begin{equation}\label{}
X:= \{ (\ell, m)\in\bbP^1\times\bbP^1\;;\; \ell\cap m = 0\}\cong
\left(\bbP^1\times\bbP^1\right)\setminus\bbP_\Delta,
\end{equation}
where $\bbP_\Delta$ is the diagonal.
We can identify $\calO_\bbC\cong X$ by
\begin{equation}\label{iden}
\calO_\bbC\ni A\mapsto (\ell_+\,,\,\ell_-)\in X, \quad \ell_{\pm} = \pm \frac i2\ \text{eigenspace of } A.
\end{equation}
Under this identification $\calO\subset\calO_\bbC$ gets identified with $M\subset X$.
\begin{lemma}
Under the  identification (\ref{iden}), the projection $\Pi: \calO_\bbC\to\calO$ is simply
\[
\Pi: X\to M\quad \Pi(\ell,\, m) = (\ell\,, \ell^\bot).
\]
Therefore, a leaf of the foliation of $\calO_\bbC$ consists of all elements
in $\calO_\bbC$ with a common $i/2$ eigenspace.
\end{lemma}
\begin{proof}
By (\ref{leafNorthPole}), the leaf through $\sigma_3$ consists  of the
elements in $\calO_\bbC$ having $e_1:=\langle 1, 0\rangle$ as an $i/2$ eigenvector.
Therefore, in our new model $\Pi: X\to M$
\[
\Pi^{-1}(\bbC e_1,\, \bbC e_2) = \{(\bbC e_1,\, m)\;;\; e_1\not\in m\}.
\]
The statement follows by SL$(2,\bbC)$ equivariance of the projection.
\end{proof}

\medskip
For future reference,
let us now use this lemma to find the leaf of the foliation of $\calO_\bbC$
consisting of all matrices having the vector
$\langle 1, \kappa\rangle$, $\kappa\in\bbC$, as an
$i/2$ eigenvector.  If $\kappa=0$ the leaf is just the leaf over $\sigma_3$, that is
(\ref{leafNorthPole}).  If $\kappa\not=0$, one can easily check that
\[
\frac i2\, \begin{pmatrix}
a & b \\ c & -a
\end{pmatrix}\,
\begin{pmatrix}
1 \\ \kappa
\end{pmatrix} = 
\frac i2 \begin{pmatrix}
1 \\ \kappa
\end{pmatrix}\quad \Leftrightarrow \quad
\begin{cases}
b = \frac{1-a}{\kappa}\\
c = \kappa(1+a).
\end{cases}
\]
Therefore, the leaf in question is
\begin{equation}\label{theLeaf}\renewcommand*{\arraystretch}{1.5}
\calL_\kappa := \left\{ \frac i2\,
\begin{pmatrix}\renewcommand*{\arraystretch}{1.5}
a & \frac{1-a}{\kappa}\\
 \kappa(1+a) & -a
\end{pmatrix}\;;\; a\in\bbC
\right\} .
\end{equation}
A calculation shows that 
the intersection of this leaf with the real locus $\calO$ is the matrix
\begin{equation}\label{}
A_\kappa := \frac {i}{2(1+|\kappa|^2)}\renewcommand*{\arraystretch}{1.5}
\begin{pmatrix}
1-|\kappa|^2 & 2\overline{\kappa}\\
2\kappa & |\kappa|^2-1
\end{pmatrix}.
\end{equation}
We see that $A_1 = \sigma_1$,  $A_{-i} = \sigma_2$ and $A_0=\sigma_3$.  
In fact the only
point in $\calO$ which is not of the form $A_\kappa$ for some $\kappa\in\bbC$
is $(-\sigma_3)$ (the only element in $\calO$ having $\langle 0, 1\rangle$
as $i/2$ eigenvector).  In fact, we can take $\kappa$ as a coordinate
on $\calO\setminus\{-\sigma_3\}$, centered at $\sigma_3$.

\subsubsection{An example of dynamics}
We can use the calculations above to find the trajectory of $A_\kappa$ under the
Hamiltonian $\frac{i}{2}x_3^2$.  Recall (\ref{coordExtension}), 
\[
x_3 \left[\frac i2\,\begin{pmatrix}
a & b \\ c & -a
\end{pmatrix}\right] = a.
\]
On the other hand the Hamilton flow of $x_3$ is conjugation by $\exp(t\sigma_3)$, that is
\[
\Psi_t\left[\frac i2\begin{pmatrix}
a & b \\ c & -a
\end{pmatrix}\right] = \frac i2
\begin{pmatrix}
e^{it/2} & 0 \\
0  & e^{-it/2}
\end{pmatrix}
\begin{pmatrix}
a & b \\ c & -a
\end{pmatrix}
\begin{pmatrix}
e^{-it/2} & 0 \\
0  & e^{it/2}
\end{pmatrix} =\frac i2
\begin{pmatrix}
a & e^{it}b \\ e^{-it}c & -a
\end{pmatrix}.
\]
To find the Hamilton flow of $\frac i2 x_3^2$ on $\calO_\bbC$ 
we simply replace $t$ by $itx_3 = ita$:
\begin{equation}\label{}
\Phi_t\left[\frac i2\begin{pmatrix}
a & b \\ c & -a
\end{pmatrix}\right] = \frac i2 \begin{pmatrix}
a & e^{-ta}b \\ e^{ta}c & -a
\end{pmatrix}.
\end{equation}
Let $\phi_t: M\to M$ be the exponential of $\frac i2 x_3^2$.
To find $\phi_t(A_\kappa)$ with $\kappa\not=0$
we look for the element $\varpi\in\calL_\kappa$ such that
$\Phi_t(\varpi)\in M$, and then $\phi_t(A_\kappa) = \Phi(\varpi)$.
Let $\varpi$ be as in the right-hand side of (\ref{theLeaf}), so that
\begin{equation}\label{}\renewcommand*{\arraystretch}{1.5}
\Phi_t(\varpi) = \frac i2
\begin{pmatrix}\renewcommand*{\arraystretch}{1.5}
a & e^{-ta}\,\frac{1-a}{\kappa}\\
e^{ta}\,\kappa(i+2a) & -a
\end{pmatrix}.
\end{equation}
For this matrix to be in $M$, i.e., for it to be skew-Hermitian, we need:
\begin{equation}\label{}
a  \in\bbR\quad\text{and}\quad
e^{ta}\,\kappa(i+2a)  =  \overline{
e^{-ta}\,\frac{1-a}{\kappa}} = e^{-ta}\,\frac{1-a}{\overline\kappa}.
\end{equation}
This is equivalent to
\begin{equation}\label{alphaEq}
e^{-2ta} ={|\kappa|^2}\,\frac{1+a}{1-a} 
\end{equation}
with $a\in\bbR$.
\begin{lemma}
If $\kappa = 0$  or $|\kappa|=1$, $A_\kappa$ is a fixed point of the exponential of $\frac i2 x_3^2$.
If $\kappa\not=0$, the exponential of $\frac i2 x_3^2$ with initial condition $A_\kappa$
exists for all time.  More specifically, $\forall t$  (\ref{alphaEq}) has a  unique (real) solution 
$a(t)$ which is smooth, and
\begin{equation}\label{}\renewcommand*{\arraystretch}{1.5}
\Phi_t(\varpi) = \frac i2
\begin{pmatrix}\renewcommand*{\arraystretch}{1.5}
a(t) & e^{-ta(t)}\,\frac{1-a(t)}{\kappa}\\
e^{ta(t)}\,\kappa(i+2a(t)) & -a(t)
\end{pmatrix}.
\end{equation}\end{lemma}
\begin{proof}
The function
\[
a\mapsto |\kappa|^2\,\frac{1+a}{1-a}
\]
is strictly increasing on $[-1\,,1)$ and maps this interval onto $[0, +\infty)$.
On the other hand, for any $t\in\bbR$ the function $a\mapsto e^{-2ta}$
is positive, monotone and bounded on $[-1\,,\,1]$.  
Therefore (\ref{alphaEq}) has a unique solution 
in $(-1,\, 1)$.  
Smoothness follows from the implicit function theorem:  $a(t)$ is implicitly
defined by the equation $F(t,a)=0$, where
\begin{equation}\label{implicitF}
F(t,a) = r^2\,\frac{1+a}{1-a}-e^{-2ta}
\end{equation}
with $r=|\kappa|$.  A calculation shows that, on $F=0$
\[
\frac{\partial F}{\partial a} = 
2\frac{r^2}{1-a}\left[ \frac{1}{1-a}
 + t(1+a)\right]
\]
where we have used the equation $F=0$ in the form $e^{-2ta} =r^2\,\frac{1+a}{1-a}$.
Therefore
\[
\frac{\partial F}{\partial a} = 0 \quad\Leftrightarrow \quad
t=\frac{1}{a^2-1},
\]
and therefore $\frac{\partial F}{\partial a} \not= 0$ for $a\in (-1, 1)$.
\end{proof}

The previous lemma says that the trajectories of the exponential of 
$\frac i2 x_3^2$ exist for all time and are smooth.  However, it is not true that,
for all $t$, $\phi_t:\calO\to\calO$ is a diffeomorphism.  We will show
this in the next section where we prove that in certain circumstances
(that include the present example), the complex structures $J_t$ must degenerate.  
We can also argue as follows.  Regard $a$, solving (\ref{alphaEq}), as a function 
of $t$ and $r=|\kappa|$.  By implicit differentiation with respect to $r$ this time,
one finds that
\begin{equation}\label{}
r\,  \left( \frac{1}{1-a^2} + t\right) \frac{\partial a}{\partial r} = -1.
\end{equation}
This equation cannot be satisfied if $t = \frac{1}{a^2-1}$.
For these values of $t$, $t\in(-\infty, -1]$, and, in fact, $\phi_t$ is a diffeomorphism
for $t\in(-1,\infty)$.

\medskip
We finally remark that, contrary to the example discussed
in \S \ref{QuadNonHerm} where $M=\bbC$, for $M=\calO$
it is not possible to have that $\Phi_t$ acts as holomorphic diffeomorphisms of 
the complexification for all time, and have a leaf of $\calF_t$ not intersect the real locus.
This is for topological reasons, as we now explain.
%
%
%
First, note that the complexification $X = \calO_\bbC$ is an affine quadric in $\bbC^3$.   
Each leaf of the foliation $\calF_0$ is a ruling complex line $\ell$ 
 of $\calO_\bbC$, which closes up to a projective line 
 $\bar{\ell} \subset \overline{\calO_\bbC} \subset \bbP^3.$ Suppose that, for
 some $t \in \bbC$ $\Phi_t(\ell)$, which is a leaf of $\calF_t$, does not intersect $M = S^2$. Then 
\[
	\Phi_{t}^{-1}(S^2) \cap \ell = \emptyset.
\]
Having compactified in $\bbP^3$, we have homology classes 
$[S^2] \in H_2(\calO_\bbC, \overline{\calO_\bbC}\setminus \calO_\bbC; \bbZ)$ 
and $[\bar{\ell}] \in H_2(\overline{\calO_\bbC}, \overline{\calO_\bbC}\setminus \calO_\bbC; \bbZ)$
and, calculating intersection products, we get
\[
	 [S^2]\cdot[\Phi_t\bar{\ell}] =[\Phi_{t}^{-1}(S^2)] \cdot [\bar{\ell}] =\pm 1 ,
\]
the sign depending on the choice of orientation on $S^2$. This contradicts $\Phi_t(\ell) \cap S^2 = \emptyset$.

\medskip
This argument is valid also for the complexifications of any coadjoint orbit of a compact group 
as described in \S \ref{CoAds}.

\section{Local obstructions to continuing a geodesic}


In this section, we examine obstructions to continuation of the geodesics given by $\phi_t$ 
generated by a purely imaginary Hamiltonian $iH, H$ real valued on $M$. If $H$ has a critical point 
at $x \in M$, then the holomorphic Hamiltonian flow $\Phi_t$ will have a fixed point at $x$. We will 
consider the transversality condition that $\calF_t$ intersect $\overline{\calF}_t$ transversally at 
$x$,and show that, in very simple examples, this transversality fails in finite time. We analyze the 
situation for $M$ of real dimension 2 and $H$ with non-zero Hessian completely. For higher 
dimensions, we resolve only the case of $H$ with a Hessian of rank 1, for example, $H = f^2$, 
where $f(x) = 0$ and $df(x) \neq 0$.


If $M$ is a real analytic manifold inside $X$ with conjugation $\sigma: X \to X$ having fixed points $M$, then, for $x \in M$, we can induce an action of $\sigma$ on $T_x^{(1,0)}X$ by $\tilde{\sigma}: v \to \overline{d\sigma_*(v)}.$ In standard local coordinates on $\bbC$, at $x \in M = \bbR \subset X = \bbC$, this just sends
%
\[
	v = a\frac{\partial \;\;}{\partial z} \to \bar{a}\frac{\partial \;\;}{\partial z}.
\]
Thus, we can identify the real tangent space $T_xM$ with the fixed points of $\tilde{\sigma}$ in $T_x^{(1,0)}X$. In other words, $v \in T_x^{(1,0)}X$ is in $T_xM$ if and only if $\;{\text{Re}}\; v \in T_xM \subset T_xX$, where these last are the real tangent spaces to the underlying real manifolds. The transversailty condition on $\calF_t$ becomes
\[
	T^{(1,0)}_x\calF_t \cap T_xM = \{0\} \; \text{in} \; T^{(1,0)}_xX. 
\]
%

If $dH(x) = 0$, we simply have $\Phi_t(x) \equiv x$, and $T_x^{(1,0)}\calF_t = d\Phi_{t,*}(T_x^{(1,0)}\calF_0) \subset T^{(1,0)}X$. In local Darboux coordinates $x_1, y_1, \ldots, x_n, y_n$, with $\omega =\sum dx_i \wedge dy_i,$, we have 
\[
	d\Phi_{t,*} = \exp(i \,t J \cdot \text{Hess}_x(H)). 
\]
%
%
We can now examine the transversality question purely locally at a critical point of $H$.
%
\begin{proposition} Let $x = 0 \in \bbR^2$ and $H$ be real analytic with $dH(0) = 0, \text{Hess}_0(H) \neq 0$. Let $\Phi_t$ be the holomorphic Hamiltonian flow of $i H$, which is defined in a sufficiently small neighborhood of $0$ for any given $t \in \bbC$. If $ \text{rank Hess}_0(H) = 2$, then $T^{(1,0)}\calF_0 \cap T_0\bbR^2 = \{0\},$ for all $t \in \bbC$. If $\text{rank Hess}_0(H) = 1$, then $T^{(1,0)}\calF_0 \cap T_0\bbR^2 \neq \{0\}$ for some finite $t \in \bbC$ (and in particular some $t \in \bbR$).
\end{proposition}
\begin{proof} In dimension 2 we can diagonalize $\text{Hess}_x(H)$ in Darboux coordinates, with eigenvalues $\lambda_1, \lambda_2 \in \bbR$.  There are three cases: $\lambda_1 \cdot \lambda_2 > 0, < 0, = 0.$ In local coordinates $x, y$ such that $\omega = dx \wedge dy,$ and 
%
\[
	\text{Hess}_x(H) = \left[\begin{array}{cc} \lambda_1 & 0 \\ 0 & \lambda_2 \end{array} \right].
\]
%
We also write $T_0^{(1,0)}\calF_0 = \bbC \cdot \left[\begin{array}{c} 1 \\ -i \end{array}\right].$ Assume first that $\mu := - \lambda_1 \cdot \lambda_2 > 0.$ Then
%
\[ \begin{array}{rcl}
	d\Phi_{t,*} & = & \exp(i t \left[\begin{array}{cc} 0 & - \lambda_2 \\ \lambda_1 & 0 \end{array} \right]) \\ && \\ & = & \cosh(\sqrt{\mu} t) I_2 + \frac{i}{\sqrt{\mu}}  \sinh(\sqrt{\mu} t) \left[\begin{array}{cc} 0 & - \lambda_2 \\ \lambda_1 & 0 \end{array} \right]. \end{array}
\]	
For $a, b \in \bbR$, $(a+ib)\cdot\left[\begin{array}{c} 1\\ - i\end{array}\right]$ is the general element of $T_0^{(1,0)}\calF_0$, and we seek solutions of $Im d\Phi_{0,*} ((a+ib)\cdot\left[\begin{array}{c} 1\\ - i\end{array}\right]) = 0.$ In matrix terms, we want
%
\[
	\left[\begin{array}{cc} \cosh(\sqrt{\mu} t) & - \frac{\lambda_2}{\sqrt{\mu}} \sinh(\sqrt{\mu}t) \\ & \\ \frac{\lambda_1}{\sqrt{\mu}} \sinh(\sqrt{\mu} t) & \cosh(\sqrt{\mu} t) \end{array}\right] \cdot \left[\begin{array}{c} a \\ \\ b \end{array}\right] = \left[\begin{array}{c} 0 \\ \\ 0\end{array}\right].
\]
%
%
But
%
%
\[
	\det \left[\begin{array}{cc} \cosh(\sqrt{\mu} t) & - \frac{\lambda_2}{\sqrt{\mu}} \sinh(\sqrt{\mu}t) \\ & \\ \frac{\lambda_1}{\sqrt{\mu}} 
	\sinh(\sqrt{\mu} t) & \cosh(\sqrt{\mu} t) \end{array}\right] = \cosh^2(\sqrt{\mu} t) + \frac{\lambda_1 \lambda_2}{\mu} \sinh^2(\sqrt{\mu} t) \equiv 1,
\] 
%
so that $T_0^{(1,0)}$ and $T_0\bbR^2$ are transverse for all $t \in \bbC$. Similar computations for $\nu := \lambda_1 \cdot \lambda_2 > 0$ lead to a system 
%
\[
	\left[\begin{array}{cc} \cos(\sqrt{\nu} t) & - \frac{\lambda_2}{\sqrt{\nu}} \sin(\sqrt{\nu}t) \\ & \\ \frac{\lambda_1}{\sqrt{\nu}} \sin(\sqrt{\nu} t) & \cos(\sqrt{\nu} t) \end{array}\right] \cdot \left[\begin{array}{c} a \\ \\ b \end{array}\right] = \left[\begin{array}{c} 0 \\ \\ 0\end{array}\right],
\]	
with 
\[
	\det \left[\begin{array}{cc} \cos(\sqrt{\nu} t) & - \frac{\lambda_2}{\sqrt{\nu}} \sin(\sqrt{\nu}t) \\ & \\ \frac{\lambda_1}{\sqrt{\nu}} \sin(\sqrt{\nu} t) & \cos(\sqrt{\nu} t) \end{array}\right] = \cos^2(\sqrt{\nu} t) + \frac{\lambda_1 \lambda_2}{\nu}\sin^2(\sqrt{\nu} t) \equiv 1,
\]
and, again, $T_0^{(1,0)}$ and $T_0\bbR^2$ are transverse for all $t \in \bbC$. Finally, if $\lambda_2 = 0$, then
\[
	d\Phi_{t,*} = \left[\begin{array}{cc} 1 & 0 \\ & \\ i t \lambda_1 & 1 \end{array}\right]
\]
and the system for transversality becomes
\[
	\left[\begin{array}{cc} 0 & 1 \\ & \\ t\lambda_1 - 1 & 0 \end{array}\right] \cdot \left[\begin{array}{c} a \\ \\ b \end{array}\right] = \left[\begin{array}{c} 0 \\ \\ 0\end{array}\right],
\]	
and
\[
	\det \left[\begin{array}{cc} 0 & 1 \\ & \\ t\lambda_1 - 1 & 0 \end{array}\right] = -(t \lambda_1 - 1),
\]
which is zero if and only if $t = \frac1{\lambda_1}.$  
\end{proof}


This results applies to $H = \frac{i}{2}x_3^2$ on $\calO\cong S^2$
as in the previous section and gives that the geodesic cannot be prolonged to $t = 1$.

The analysis above extends directly to higher dimensions for an $H$ with a critical point $x$ where the rank of $ \text{Hess}_x(H)$ is equal to one.

\end{document}